\documentclass{article}
\usepackage{lmodern}
\usepackage[T1]{fontenc}
\usepackage[latin9]{inputenc}
\usepackage{amsmath}
\usepackage{amsthm}
\usepackage{amssymb}
\usepackage{graphicx}
\usepackage{color}
\usepackage{authblk}
\usepackage{fancyhdr}
\pagestyle{fancy}

\makeatletter

\providecommand{\tabularnewline}{\\}

\theoremstyle{definition}
\newtheorem{defn}{\protect\definitionname}
\theoremstyle{plain}
\newtheorem{thm}{\protect\theoremname}
\theoremstyle{plain}
\newtheorem{prop}{\protect\propositionname}
\theoremstyle{definition}
\newtheorem{example}{\protect\examplename}
\theoremstyle{definition}
\newtheorem*{defn*}{\protect\definitionname}

\makeatother

\usepackage[english]{babel}
\providecommand{\definitionname}{Definition}
\providecommand{\examplename}{Example}
\providecommand{\propositionname}{Proposition}
\providecommand{\theoremname}{Theorem}

\begin{document}
	
	\title{Individual Differential Privacy: A Utility-Preserving Formulation of 
		Differential Privacy Guarantees}
	\author[1]{Jordi Soria-Comas\thanks{jordi.soria@urv.cat}}
	\author[1]{Josep Domingo-Ferrer\thanks{josep.domingo@urv.cat}}
	\author[1]{David S\'anchez\thanks{david.sanchez@urv.cat}}
	\author[2]{David Meg\'ias\thanks{dmegias@uoc.edu}}
	\affil[1]{UNESCO Chair in Data Privacy,
			Universitat Rovira i Virgili.}
	\affil[2]{Internet Interdisciplinary Institute (IN3), Universitat Oberta de Catalunya.}
	
	
	\date{}
	
	\fbox{\begin{minipage}[t]{.9\columnwidth}%
			\begin{center}
				\Large
				This is a preprint of \url{http://dx.doi.org/10.1109/TIFS.2017.2663337}
				\par\end{center}%
		\end{minipage}}
	
	\begingroup
	\let\newpage\relax
	\maketitle
	\endgroup
	
	\lhead{}
	\rhead{}
	\chead{This is a preprint of
	 \url{http://dx.doi.org/10.1109/TIFS.2017.2663337}}
\begin{abstract}
Differential privacy is a popular privacy model within the research
community because of the strong privacy guarantee it offers, namely
that the presence or absence of any individual in a data set does
not significantly influence the results of analyses on the data set.
However, enforcing this strict guarantee in practice significantly
distorts data and/or limits data uses, thus diminishing the analytical
utility of the differentially private results. In an attempt to address
this shortcoming, several relaxations of differential privacy have
been proposed that trade off privacy guarantees for improved data
utility. In this work, we argue that the standard formalization of
differential privacy is stricter than required by the intuitive privacy
guarantee it seeks. In particular, the standard formalization requires
indistinguishability of results between \emph{any} pair of 
neighbor data
sets, while indistinguishability between \emph{the actual} data set and its
neighbor data sets should be enough. This limits the data controller's
ability to adjust the level of protection to the actual data, hence
resulting in significant accuracy loss. In this respect, we propose
\emph{individual differential privacy}, an alternative differential
privacy notion that offers {\em the same privacy guarantees as standard
differential privacy to individuals} (even though not to groups
of individuals). This new notion allows the data controller to adjust
the distortion to the actual data set, which results in less distortion
and more analytical accuracy. We propose several mechanisms to attain
\emph{individual differential privacy} 
and we compare 
the new notion against standard
differential privacy in terms of the accuracy of the analytical results. \end{abstract}

\section{Introduction}

The development of information technologies has boosted the collection
of data about individuals. Data stores containing personal data are
a valuable resource to support decision making both in the public
and the private sector. However, individuals have growing concerns
on the use of their potentially sensitive data. In order to
guarantee the fundamental right of individuals to privacy, appropriate
data protection measures to limit disclosure risk must be implemented
before making data available for analysis.

Essentially, two main approaches exist in statistical disclosure
control~\cite{hundepool2012,book2016} to limit the disclosure risk:
(i) non-interactive protection, whereby a protected version of the
original data set collected from the data subjects is generated and
released, and (ii) interactive protection, whereby a user-queried
data analysis is performed on the original data set, and then a protected
version of the results is returned to the user. When the type of data
analysis is not known at the time of data protection, the former approach
is the only viable solution; however, for a fixed data analysis known
beforehand, interactive protection should be preferred as it allows
adjusting the level of protection to the analysis being performed,
which makes it possible to maximize the accuracy of the results.

In this work, we focus on the interactive setting, which differential
privacy~\cite{Dwork2006} has substantially advanced. The main contribution
of differential privacy is the strong privacy guarantee it provides:
while it does not prevent disclosure, it guarantees that a disclosure
is equally likely whether or not any particular individual contributes
her data. This is diametrically opposed to the aim of privacy models
focusing on the non-interactive setting~\cite{Samarati2001,Machanavajjhala2007,Li2007,Soria2012kanonprob},
which seek to provide absolute guarantees against specific types of
intruders (with a specific knowledge on the data to be protected).
The problem of the latter models is that real intruders may differ
from the specific intruders 
considered by the models; in particular,
real intruders may know more than assumed.

However, despite its popularity among researchers and the step forward
it offers in terms of privacy guarantees, differential privacy is
only being deployed to a limited extent in real-world applications.
The basic reason is the poor accuracy/utility of differentially private
results. Except for a number of well-behaved applications (queries
that are stable to modification of one record), differential privacy
has too large an impact on utility for it to be widely used in data
analysis/mining (see related remarks in Section~\ref{sec:rel}).
This is also confirmed by the amount of relaxations of differential
privacy that have been proposed (see Section~\ref{sec:rel}), which
seek to improve the accuracy of the results by trading it off against
privacy guarantees.

In any case, the intuition behind differential privacy, that is, \emph{``the
presence or absence of an individual in a data set should not significantly
modify the results of the analysis''}, is very adequate for disclosure
risk limitation: if the data on an individual have a significant impact
on the results of an analysis, most probably the privacy of this individual
is at risk. Thus, intuitively, differential privacy ensures the data
are adequately protected.

However, we argue here that the standard formalization of differential
privacy is stronger (and, thus, leads to more data distortion and 
less useful results) than what the above data protection
intuition requires. This can be seen from two different but related
points of view: 
\begin{itemize}
\item Differential privacy assumes the presence of a trusted party that
holds the data set, receives queries submitted by the users and returns
differentially private results for these queries. In the standard differential
privacy formulation, the trusted party is not allowed to use its knowledge
of the actual data set 
when computing the noise to be added 
to the query response in
order to protect privacy. 
\item Differential privacy provides provable privacy guarantees to groups
of data subjects (in addition to those that derive from the guarantees
given to each data subject individually). These guarantees for groups
go beyond the previously mentioned intuition behind differential
privacy, which is about protecting single individuals. 
\end{itemize}
While being more demanding than required by the intuitive
notion of differential privacy may be meaningful, we argue
in the following sections that enforcing an actually stronger privacy
notion may have a significant negative impact on the accuracy/utility
of the protected results. Thus, if data utility is a priority, the
suitability of including additional guarantees (i.e., group-based
differential privacy) should be carefully pondered.
Besides, in statistical disclosure control, the goal
is to prevent accurate inferences on single individuals, but 
accurate inferences on groups (statistical
and aggregate computations) are usually viewed as legitimate analyses.

In this line, we propose here {\em individual differential privacy},
a novel formalization of differential privacy that is aligned with
the intuitive differential privacy guarantees. Our approach
limits the protection to the individuals in the data set (as suggested
in the intuitive notion), rather than extending it to groups of individuals
(as formalized in the standard definition of differential privacy).
In practice, this means that the trusted party managing the data can
make use of its knowledge of the actual data set at the time of query
response and downwardly adjust the distortion to the actual data;
as a consequence of the lower data distortion, the accuracy/utility of
the protected results can be significantly improved in comparison
with the standard formulation of differential privacy.

Nonetheless, we will also indicate how to extend individual differential
privacy to group differential privacy, in case the data 
controller wants to offer privacy guarantees
for groups while still 
leveraging his knowledge on the actual data set when computing the 
distortion to be added for protection.  

The rest of this paper is organized as follows. Section~\ref{sec:dp}
provides background on the standard formulation and the usual mechanisms
to attain differential privacy. Section~\ref{sec:rel} reviews related
work aimed at improving the accuracy/utility of differentially private
results and highlights the differences with our approach. In Section~\ref{sec:idp},
we present and formalize our \emph{individual differential privacy}
model and in Section~\ref{sec:mechanism} we propose several mechanisms
to satisfy it that exploit its less strict formulation to reduce data
distortion. Section~\ref{sec:Evaluation} analyzes the accuracy of
individually differentially private results of several common queries
and compares their accuracy against the one obtained using standard
differential privacy. The final section presents the conclusions along
with some future research lines.

\section{Background on Differential Privacy\label{sec:dp}}

Differential privacy was originally proposed in~\cite{Dwork2006}
as a privacy model for the interactive setting, that is, to protect
the results of queries to a database. In this setting, a differentially
private sanitization mechanism sits between the user submitting queries
and the database controller answering them. To preserve the privacy of
individuals, the sanitization mechanism must guarantee that the contribution
of each individual's data to a query result is limited (according
to an $\epsilon$ parameter).
\begin{defn}[$\epsilon$-differential privacy]
\label{def:dp}A randomized function $\kappa$ gives $\epsilon$-differential
privacy (or $\epsilon$-DP) if, for all data sets $D_{1}$ and $D_{2}$
that differ in one record (a.k.a. neighbor data sets), and
all $S\subset Range(\kappa)$, we have 
\[
\Pr(\kappa(D_{1})\in S)\text{\ensuremath{\le}}\exp(\epsilon)\Pr(\kappa(D_{2})\in S).
\]
\end{defn}

Let $D$ be the data set that is to be protected. Given a query $f$,
the goal in DP is to find an approximation to $f$ that satisfies
the privacy requirements stated in Definition~\ref{def:dp}. Let
us call $\kappa_{f}$ such an approximation. The value of $\kappa_{f}(D)$
is then returned as the query result, instead of the actual value
$f(D)$.

In the following, we refer to the differentially private condition
in terms of indistinguishability: the distributions of the responses
to the queries between data sets that differ in one record are similar.

An interesting property of DP, which the privacy models in the $k$-anonymity
family ($k$-anonymity~\cite{Samarati2001}, $l$-diversity~\cite{Machanavajjhala2007},
$t$-closeness~\cite{Li2007}, etc.) do not have, is composition:
composing several differentially private results still satisfies
DP, although with a different $\epsilon$ parameter value. Several
composition theorems have been stated. The most basic ones are:
\begin{thm}[Sequential composition]
\label{teo1} Let $\kappa_{1}$ be a randomized function giving $\epsilon_{1}$-DP
and $\kappa_{2}$ a randomized function giving $\epsilon_{2}$-DP.
Then, any deterministic function of $(\kappa_{1},\kappa_{2})$ gives
$(\epsilon_{1}+\epsilon_{2})$-DP. 
\end{thm}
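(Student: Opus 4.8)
The plan is to split the statement into two parts that compose cleanly. First I would show that the \emph{joint} release $\kappa(D) := (\kappa_{1}(D),\kappa_{2}(D))$ --- where $\kappa_{1}$ and $\kappa_{2}$ draw their internal randomness independently --- already satisfies $(\epsilon_{1}+\epsilon_{2})$-DP. Then I would invoke a post-processing lemma: if $\kappa$ is $\epsilon$-DP and $g$ is any (measurable) deterministic map, then $g\circ\kappa$ is $\epsilon$-DP. Applying this with $\epsilon=\epsilon_{1}+\epsilon_{2}$ and $g$ the deterministic function of $(\kappa_{1},\kappa_{2})$ appearing in the statement yields the claim.

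For the joint-mechanism step, fix neighbor data sets $D_{1},D_{2}$ and a measurable set $T\subseteq Range(\kappa_{1})\times Range(\kappa_{2})$. By independence of the two sources of randomness, the law of $\kappa(D)$ is the product of the laws of $\kappa_{1}(D)$ and $\kappa_{2}(D)$. Writing this out --- a sum over $T$ in the discrete case, an integral against the product measure in general --- and using Definition~\ref{def:dp} for $\kappa_{1}$ and for $\kappa_{2}$ separately to bound the two factors of the integrand pointwise by $\exp(\epsilon_{1})$ and $\exp(\epsilon_{2})$ times the corresponding factors at $D_{2}$, one obtains $\Pr(\kappa(D_{1})\in T)\le\exp(\epsilon_{1}+\epsilon_{2})\Pr(\kappa(D_{2})\in T)$. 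Intuitively, the worst-case multiplicative blow-ups of the two independent outputs simply multiply.

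For the post-processing step, let $g$ be deterministic and $S\subseteq Range(g)$. Then $\{g(\kappa(D))\in S\}=\{\kappa(D)\in g^{-1}(S)\}$, and since $g^{-1}(S)$ is a subset of $Range(\kappa)$, the $(\epsilon_{1}+\epsilon_{2})$-DP guarantee of $\kappa$ established above gives $\Pr(g(\kappa(D_{1}))\in S)=\Pr(\kappa(D_{1})\in g^{-1}(S))\le\exp(\epsilon_{1}+\epsilon_{2})\Pr(\kappa(D_{2})\in g^{-1}(S))=\exp(\epsilon_{1}+\epsilon_{2})\Pr(g(\kappa(D_{2}))\in S)$, which is exactly the $(\epsilon_{1}+\epsilon_{2})$-DP condition for the composed function.

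I expect the main obstacle to be the measure-theoretic bookkeeping in the continuous case rather than anything conceptual: justifying the factorization of the joint law, interchanging the pointwise bound with the integral over $T$, and checking measurability of $g^{-1}(S)$. A secondary point worth making explicit is the standing assumption that $\kappa_{1}$ and $\kappa_{2}$ use independent randomness (or, in the adaptive reading, that $\kappa_{2}$'s fresh coins are independent of $\kappa_{1}$'s output), since without this the product-law factorization --- and hence the theorem --- can fail.
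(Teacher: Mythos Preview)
Your proof is correct and follows the standard route (joint release via independent coins, then post-processing through the deterministic $g$). Note, however, that the paper does not actually prove Theorem~\ref{teo1}: it is stated in Section~\ref{sec:dp} as known background on differential privacy, and later the iDP analogues are asserted to follow by ``straightforward adaptations'' of the DP proofs, which are themselves omitted. So there is no paper proof to compare against; your argument is exactly the textbook one the paper is implicitly invoking.
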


\begin{thm}[Parallel composition]
\label{teo2} Let $\kappa_{1}$ and $\kappa_{2}$ be randomized functions
giving $\epsilon$-DP. If $\kappa_{1}$ and $\kappa_{2}$ are applied
to {\em disjoint} data sets or subsets of records, any deterministic
function of $(\kappa_{1},\kappa_{2})$ gives $\epsilon$-DP. 
\end{thm}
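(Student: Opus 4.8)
The plan is to reduce to the case where the changed record affects only one of the two mechanisms, and then exploit the independence of their internal randomness. First I would fix an arbitrary pair of neighbor data sets $D$ and $D'$ and write $D = A_1 \cup A_2$ and $D' = A_1' \cup A_2'$ for the partitions into the disjoint subsets of records that $\kappa_1$ and $\kappa_2$ respectively act on. Since $D$ and $D'$ differ in a single record and the two subsets are disjoint, that record lies in exactly one block; without loss of generality $A_2 = A_2'$ while $A_1$ and $A_1'$ are themselves neighbor data sets. This is the only step that uses the disjointness hypothesis, and it is essential — if the blocks overlapped, the changed record could land in both.

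Next I would show that the pair $M = (\kappa_1,\kappa_2)$, viewed as a single randomized function valued in $Range(\kappa_1)\times Range(\kappa_2)$, is already $\epsilon$-DP. Treating $\kappa_1$ and $\kappa_2$ as drawing on independent sources of randomness, for a product event $S_1\times S_2$ we have
\[
\Pr\big(M(D)\in S_1\times S_2\big)=\Pr\big(\kappa_1(A_1)\in S_1\big)\,\Pr\big(\kappa_2(A_2)\in S_2\big).
\]
By $\epsilon$-DP of $\kappa_1$ the first factor is at most $\exp(\epsilon)\Pr(\kappa_1(A_1')\in S_1)$, and since $A_2 = A_2'$ the second factor equals $\Pr(\kappa_2(A_2')\in S_2)$; multiplying gives the $\exp(\epsilon)$ bound for $M$ on product events. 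For a general measurable $S$ I would pass from product events to arbitrary subsets of the product range by writing $\Pr(M(D)\in S)$ as the integral over the sections $S_y=\{x:(x,y)\in S\}$ against the law of $\kappa_2(A_2)$ (Tonelli), applying the bound to each section, and again using $A_2 = A_2'$; in the discrete case this is simply a double sum. Note that $\epsilon$ does \emph{not} grow here — in contrast with sequential composition — precisely because only one of the two mechanisms is exposed to the change.

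Finally, for an arbitrary deterministic function $g$ I would invoke the standard post-processing (pre-image) argument: for every $T\subseteq Range(g)$,
\[
\Pr\big(g(M(D))\in T\big)=\Pr\big(M(D)\in g^{-1}(T)\big)\le \exp(\epsilon)\Pr\big(M(D')\in g^{-1}(T)\big)=\exp(\epsilon)\Pr\big(g(M(D'))\in T\big),
\]
which is exactly $\epsilon$-DP for $g\circ M$, and hence for any deterministic function of $(\kappa_1,\kappa_2)$.

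I expect the only genuinely delicate point to be the measure-theoretic step: justifying the passage from product events to arbitrary measurable subsets of the product range (measurability of the sections and applicability of Tonelli), which in turn forces one to state the independence of the two mechanisms' randomness precisely. Everything else — locating the changed record in a single block, the factorization, and the post-processing inequality — is routine bookkeeping, and if one restricts attention to discrete output spaces the entire argument reduces to elementary manipulation of sums.
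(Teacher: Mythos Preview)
The paper does not actually prove Theorem~\ref{teo2}: it is stated in Section~\ref{sec:dp} as a standard background result on differential privacy, with no proof given, and the later iDP composition theorems are likewise left unproven with the remark that their proofs are ``straightforward adaptations of the proofs of Theorems~\ref{teo1} and~\ref{teo2} for DP.'' So there is no proof in the paper to compare against.

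That said, your argument is the standard one and is correct. The three ingredients --- locating the changed record in a single block by disjointness, factoring the joint output distribution using independence of the mechanisms' randomness so that only one $\exp(\epsilon)$ factor appears, and then closing under deterministic post-processing via pre-images --- are exactly what is needed. Your identification of the one genuinely non-trivial point (extending from product events to arbitrary measurable $S$ in the product range, which requires the section/Tonelli argument and a precise statement of independence) is accurate; in the discrete setting everything collapses to sums and the subtlety disappears. One small remark: you implicitly assume that the partition of $D$ into $A_1$ and $A_2$ is determined by a rule that does not itself depend on the record values (e.g.\ a fixed index set or a fixed attribute-based predicate), so that modifying a record does not move it between blocks; this is the intended reading of ``disjoint subsets of records'' and is worth stating explicitly.
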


The computational mechanism to attain DP is often called a differentially
private sanitizer. Differentially private mechanisms for numerical
data can be seen as noise addition mechanisms that, rather than
returning the actual query value $f(D)$, mask it by adding some noise.
The amount of noise that needs to be added depends on the variability
of the query function between neighbor data sets. This noise may
be: (i) independent of the data set, which requires adjusting the
noise to the maximum variability between neighbor data sets, or
(ii) dependent on the data set, which allows adjusting the level of
noise to the variability of $f$ in the actual data set.

\subsection{Noise calibration to the global sensitivity\label{sub:laplace}}

The global sensitivity measures the maximum variability of a function
between neighbor data sets.
\begin{defn}[$l_{1}$-sensitivity (a.k.a. global sensitivity)]
\label{def:global_sensitivity}The $l_{1}$-sensitivity of a function
$f:\mathcal{D}\rightarrow\mathbb{R}^{k}$ is 
\[
\Delta f=\max_{\begin{array}{c}
{\scriptstyle {x,y\in\mathcal{D}}}\\
{\scriptstyle {d(x,y)=1}}
\end{array}}\left\Vert f(x)-f(y)\right\Vert _{1},
\]
where $d(x,y)$ means that data sets $x$ and $y$ differ in one record. 
\end{defn}
In calibration to the global sensitivity, DP is attained by adding a noise
to the query response that is proportional to the global sensitivity;
that is, the maximum variability of the query response in the domain
of the data. Several noise distributions are possible (e.g. the Laplace
distribution~\cite{DworkMNS06}, the optimal absolutely continuous
distribution~\cite{Soria13}, and the discrete Laplace distribution~\cite{Inusah2006}).
For the sake of conciseness, we focus on the commonly used Laplace
distribution.

\begin{prop}
Let $f$ be a query function with values in $\mathbb{R}^{k}$. The
mechanism $\kappa_{f}(X)=f(X)+(N_{1},\ldots,N_{k})$, where $N_{i}$
are independent and identically distributed random noises drawn from a
$Laplace(0,\epsilon/\Delta f)$ distribution, 
is $\epsilon$-differentially private. 
\end{prop}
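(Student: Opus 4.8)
The plan is to write down the output density explicitly, bound the ratio of densities for two neighbor data sets pointwise, and then integrate over an arbitrary measurable set. Since $N_1,\dots,N_k$ are i.i.d.\ Laplace noises, the random variable $\kappa_f(X)=f(X)+(N_1,\dots,N_k)$ has density
\[
p_X(z)=\prod_{i=1}^{k}\frac{\epsilon}{2\Delta f}\exp\!\left(-\frac{\epsilon}{\Delta f}\,\bigl|z_i-f(X)_i\bigr|\right)
\]
at any point $z=(z_1,\dots,z_k)\in\mathbb{R}^k$, where $f(X)_i$ denotes the $i$-th coordinate of $f(X)$ (here the second parameter of the Laplace distribution is read so that the noise has scale $\Delta f/\epsilon$).

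First I would fix two neighbor data sets $D_1$ and $D_2$, i.e.\ with $d(D_1,D_2)=1$, and form the ratio of the corresponding output densities at an arbitrary $z$:
\[
\frac{p_{D_1}(z)}{p_{D_2}(z)}=\exp\!\left(\frac{\epsilon}{\Delta f}\sum_{i=1}^{k}\bigl(|z_i-f(D_2)_i|-|z_i-f(D_1)_i|\bigr)\right).
\]
Next I would apply the reverse triangle inequality coordinatewise, $|z_i-f(D_2)_i|-|z_i-f(D_1)_i|\le |f(D_1)_i-f(D_2)_i|$, so that the exponent is at most $\frac{\epsilon}{\Delta f}\sum_{i=1}^{k}|f(D_1)_i-f(D_2)_i|=\frac{\epsilon}{\Delta f}\,\|f(D_1)-f(D_2)\|_1$. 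Since $D_1$ and $D_2$ differ in exactly one record, Definition~\ref{def:global_sensitivity} gives $\|f(D_1)-f(D_2)\|_1\le\Delta f$, and therefore $p_{D_1}(z)\le \exp(\epsilon)\,p_{D_2}(z)$ for every $z\in\mathbb{R}^k$.

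Finally, I would lift this pointwise bound to the set-level statement of Definition~\ref{def:dp}: for any (measurable) $S\subset Range(\kappa_f)$,
\[
\Pr(\kappa_f(D_1)\in S)=\int_{S}p_{D_1}(z)\,dz\le \exp(\epsilon)\int_{S}p_{D_2}(z)\,dz=\exp(\epsilon)\,\Pr(\kappa_f(D_2)\in S),
\]
which is precisely $\epsilon$-differential privacy. I do not expect a genuine obstacle here; the only steps that deserve care are the reverse triangle inequality (this is exactly where the use of the $l_1$-norm, and hence of $l_1$-sensitivity in the noise scale, is essential — a different norm would force a different calibration) and the remark that a bound valid for \emph{all} $z$ transfers to \emph{every} measurable $S$, so no hypothesis on $S$ beyond measurability is needed.
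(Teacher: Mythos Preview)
Your argument is correct. The paper itself does not supply a proof for this proposition---it is stated as background in Section~\ref{sub:laplace}. The nearest thing to a proof in the paper is the proof of Proposition~\ref{prop:iDP_Lap} (the iDP analogue calibrated to the local sensitivity), which follows exactly the same pointwise density-ratio route you take, only more tersely: it simply asserts that for a vector $N$ of independent $Laplace(0,\beta)$ variables one has $\exp(-\|z-z'\|_1/\beta)\le \Pr(z+N=s)/\Pr(z'+N=s)\le \exp(\|z-z'\|_1/\beta)$ and then substitutes $z=f(D)$, $z'=f(D')$, $\beta=\text{(sensitivity)}/\epsilon$. Your write-up unpacks that asserted inequality via the explicit product density and the reverse triangle inequality, and also makes explicit the integration step that lifts the pointwise bound to arbitrary measurable $S$; this is more complete but not a different method. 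Your parenthetical remark about reading the scale as $\Delta f/\epsilon$ is also on point: the paper's own Laplace density $L_{\mu,b}$ and its later Proposition~\ref{prop:iDP_Lap} both use the convention in which the scale is sensitivity divided by $\epsilon$, so the ``$\epsilon/\Delta f$'' in the statement is evidently a slip.
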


\subsection{Noise calibration to the smooth sensitivity\label{sub:smooth}}

Even if the variability of the query function in the actual data set
is small, quite often its variability in the data domain (i.e., global
sensitivity) is large. In these cases, the use of a noise calibrated
to the global sensitivity may seriously and unjustifiedly damage data,
but a data-dependent noise may still permit accurate results. Given
a data set $D$, the variability of a function between $D$ and its
neighbor data sets is known as the \emph{local sensitivity}.
\begin{defn}[Local sensitivity~\cite{Nissim2007}]
\label{def:local}Let $f$ be a function that is evaluated at data
sets and returns values in $\mathbb{R}^{k}$. The local sensitivity
of $f$ at $D$ is 
\[
LS_{f}(D)=\max_{y:d(y,D)=1}\left\Vert f(y)-f(D)\right\Vert _{1}.
\]

\end{defn}
Obviously, the global sensitivity upper-bounds the local sensitivity.
Moreover, the local sensitivity is usually small, except for especially
ill-conditioned data sets. In this case, the gap between the global
and the local sensitivity can be large. This is illustrated in the
following example for a function that returns the median of a list
of values.
\begin{example}
Consider a data set $X=\{x_{1},\ldots,x_{n}\}$, where each record
corresponds to a value in $\{0,1\}$. For the sake of simplicity,
we assume that the number of records is odd, say $n=2m+1$, so that
the median corresponds to a single record, the $m+1$-th record. The
global sensitivity of the median is 1, since we can consider the neighbor
data sets:

\[
\begin{array}{c}
\{0,\stackrel{m+1}{\ldots},0,1,\stackrel{m}{\ldots},1\}\rightarrow\textit{median}=0,\\
\{0,\stackrel{m}{\ldots},0,1,\stackrel{m+1}{\ldots},1\}\rightarrow\textit{median}=1.
\end{array}
\]

The local sensitivity is, except for the two previous data sets, always
0. The reason is that, except for the previous data sets, changing
the value of a record does not modify the median. 
\end{example}
In most cases, releasing the value of a query with a magnitude of
noise proportional to the local sensitivity (rather than the global
sensitivity) would result in a significantly more accurate response.
However, using the local sensitivity in mechanisms designed for global
sensitivity does not yield DP.
\begin{example}
Consider the data sets: $\{0,\stackrel{m+2}{\ldots},0,1,\stackrel{m-1}{\ldots},1\}$
and $\{0,\stackrel{m+1}{\ldots},0,1,\stackrel{m}{\ldots},1\}$. In
both cases the median is 0, but the local sensitivity differs; it
is 0 in the first case and 1 in the second one:

\begin{tabular}{ccc}
 & \textit{median}  & \textit{local sensitivity}\tabularnewline
$X=\{0,\stackrel{m+2}{\ldots},0,1,\stackrel{m-1}{\ldots},1\}$  & 0  & 0\tabularnewline
$X'=\{0,\stackrel{m+1}{\ldots},0,1,\stackrel{m}{\ldots},1\}$  & 0  & 1
\end{tabular}\vspace{2mm}

Given that the local sensitivity for $X$ is 0, adding a noise proportional
to the local sensitivity does not modify the median; thus, the probability
of getting 1 is 0. For $\epsilon$-DP to be satisfied, the probability
of getting 1 for $X'$ must also be 0. However, that is not the case,
since the local sensitivity for $X'$ is different from 0. 
\end{example}
The previous example shows that the amount of noise used to protect
a data set should not only be proportional to its local sensitivity
but also take into account the local sensitivity of neighbor data
sets. Considering the local sensitivity of neighbor data sets leads
to the notion of \emph{smooth sensitivity}.
\begin{defn}[Smooth sensitivity]
For $\beta>0$ and a data set $D\in\mathcal{D}$, the $\beta$-smooth
sensitivity of $f$ at $D$ is defined as 
\[
S_{f,\beta}(D)=\max_{y\in\mathcal{D}}(LS_{f}(y)\exp(-\beta d(D,y))).
\]

\end{defn}

The greater the parameter $\beta$, the smaller the dependence of
the smooth sensitivity on the local sensitivity of neighbor data
sets. Thus, the amount of noise required to attain $\epsilon$-DP
must depend on other factors apart from the smooth sensitivity. In particular,
a special kind of distributions, known as $(\alpha,\beta)$-admissible
distributions~\cite{Nissim2007},
 must be used. The downside of these distributions is
that they are heavy-tailed.

\section{Related Work\label{sec:rel}}

Even though DP is appreciated for the strong privacy guarantees it
offers, its practical deployment is hampered by the poor accuracy
offered by differentially private mechanisms.

Works that discuss the accuracy limitations of differentially private
results in different contexts include~\cite{Bambauer2013,Sarathy2011,Friedman2010,Mohan12,Fredrikson14}.
Although these accuracy issues have hindered the practical adoption
of DP, they have also boosted further research to find fixes. This
research has taken two main lines: (i) to come up with novel differentially
private mechanisms that improve the accuracy of the results, and (ii)
to propose relaxations of DP that require less data distortion and
allow for more accurate results.

An outcome of the first line is the design of general mechanisms that
improve accuracy with respect to the basic Laplace noise addition
mechanism. In~\cite{Nissim2007}, the calibration of Laplace noise
to the global sensitivity of the data is replaced by the calibration
of suitable noises to the smooth sensitivity. On the other hand, the
authors of~\cite{Soria13} showed that the Laplace distribution
is not optimal to attain DP based on calibration to the global sensitivity.
They described and constructed the optimal absolutely
continuous distributions: essentially,
a distribution is optimal if the probability mass is as concentrated
as possible around zero given the DP constraints. Other outcomes of
the first line of research are mechanisms that are less sensitive.
In~\cite{Soria2013,Soria2014,Sanchez2016}, several methods based on microaggregation
of records are proposed to generate differentially private data sets.
In~\cite{Zhang2014}, the dependence between attributes is analyzed
to reduce the dimensionality in the computation of differentially
private histograms. Other works that try to improve the accuracy in
histogram publication are~\cite{Xu2012,Xiao2010}. In~\cite{Jagannathan2012},
a differentially private alternative to the ID3 algorithm for learning
decision trees is proposed.

The second line of research, focused on finding relaxations of DP,
has been active since the inception of DP. In~\cite{DworkKMMN06},
the concept of $\delta$-approximate $\epsilon$-indistinguishability
is presented (a.k.a. $(\epsilon,\delta)$-indistinguishability in~\cite{Nissim2007}).
This relaxation allows some additional margin $\delta$ to the requirements
in DP. In~\cite{Mach08}, the notion of $(\epsilon,\delta)$-probabilistic
differential privacy (a.k.a. $(\epsilon,\delta)$-pdp) is proposed.
Rather than allowing some additional margin, it requires $\epsilon$-DP
to be satisfied with probability greater than $1-\delta$. In other
words, the probability that the adversary gains significant information
about an individual is, at most, $\delta$. Yet another relaxation
is given in~\cite{Mohan12}, who assume that confidential data become
less sensitive over time, which allows relaxing privacy parameters
for older data. In~\cite{Mach15}, a relaxation is presented that
restricts the definition of neighbor data sets: they are no longer
data sets differing in any record, but in a record within a certain subset.
In~\cite{DworkR16}, an alternative relaxation of DP, called $(\mu,\tau)$-concentrated
differential privacy, is proposed. Similarly to $(\epsilon,\delta)$-pdp,
concentrated differential privacy allows the ratio of probabilities
to be arbitrarily large with a small probability that is determined
by the parameters $\mu$ and $\tau$.

In all the above relaxations, the accuracy gain is obtained by allowing
the differentially private condition to be broken: the presence or
absence of an individual may leak some information, although not too
much or only with a small probability. The relaxation that we propose
in the next section is different in the sense that the privacy guarantees
for individuals envisioned in the original definition of DP are preserved.

\section{Individual differential privacy\label{sec:idp}}

The existing relaxations of DP violate strict DP because, under some
circumstances, they permit the probability of responses to differ
significantly between neighbor data sets. In this section, we take
a different approach to improve data accuracy/utility. Rather than
proposing another relaxation, we analyze the intuitive privacy guarantees
that DP seeks to capture and we suggest an alternative definition
that allows exactly attaining these privacy guarantees and nothing
more.

\subsection{Intuitive view of differential privacy\label{sub:intuitive}}

As introduced above, DP assumes the presence of a trusted party that:
(i) holds the data set, (ii) receives the queries submitted by the
data users, and (iii) responds to them in a privacy-aware manner.

We described in Section~\ref{sec:dp} that the standard notion of
DP is formalized in terms of indistinguishability of the response
to queries between neighbor data sets. However, prior to this formalization,
\cite{Dwork2006} provided an intuitive description of the privacy
guarantees aimed at by DP: 
\begin{quotation}
Any given disclosure will be, within a small multiplicative factor,
just as likely whether or not the individual participates in the database.
As a consequence, there is a nominally higher risk for the individual
who participates, and only a nominal gain for the individual who conceals
or misrepresents her data. 
\end{quotation}
Thus, individuals should not be reluctant to participate in the data set.
After all, the risk of disclosure is only very marginally increased 
by participation.

Providing relative (rather than absolute) privacy guarantees is 
the only sensible approach to deal with intruders having 
arbitrary side knowledge.
Absolute privacy guarantees against arbitrary side knowledge 
are incompatible with releasing accurate statistics. 
This is illustrated in~\cite{Dwork2006} with a simple example about the
height of an individual $I$: if the intruder knows 
that the height of $I$ is two
inches above the average height of the population of a country, then 
releasing an accurate approximation of the average height of the population 
provides the intruder with an accurate estimate of $I$'s height. 
Here it is important to notice that such
disclosure happens even if $I$ does not participate in the data set.
Thus, the disclosure is not associated to $I$'s participation in the
data set, but to the accurate side knowledge. This illustrates
that differential
privacy cannot guarantee that disclosure will not happen; what it 
guarantees is that, should disclosure happen, it will not be due
to the participation of any particular individual.

\subsection{Formalization of individual differential privacy}

DP is the result of formalizing the intuitive view of privacy described
in Section~\ref{sub:intuitive} into a rigorous mathematical definition.
However, the fact that in this formalization (see Definition~\ref{def:dp})
results are required to be indistinguishable between \emph{any} pair of
neighbor data sets is more stringent than required by the intuitive
view of privacy that is described above.
Let us explain this in greater detail.

Consider an individual $I$ who has to decide between participating
in a data set or not. To neutralize any reluctance by $I$ to
disclose her private information, $I$ is told 
that query answers based on the data
set will not allow anyone to learn anything that was 
not learnable without $I$'s 
presence; this is precisely the intuitive privacy guarantee DP offers. 
Although DP controls the access to a
data set regardless of the way the data were collected (either
through voluntary participation or not), thinking in terms of
voluntary participation is clarifying: we can think that an 
individual accepts to participate if she regards
privacy protection as sufficient. 

To attain such privacy guarantees, DP requires the response to be 
indistinguishable between any pair of neighbor data sets. 
While such a requirement yields the target privacy guarantees indeed,
it is an overkill because, in reality, queries are not answered 
on an arbitrary data set, but on the actual data set held by the
trusted party. In other words, if $D$ is the collected data set,
the target privacy guarantees can be attained by  
just requiring indistinguishability
of the responses between $D$ and its neighbor data sets.
Notice that, although the data
set $D$ is not known until all the individuals have made their 
decisions about participating/contributing to it,
the presence of a trusted party that controls the access to $D$ makes
it possible to give the previously described indistinguishability
guarantees. After all, the data set $D$ is known to the trusted party 
at the time of query response. 

According to the previous discussion, we propose the following privacy
model, which we call $\epsilon$-individual differential privacy.

\begin{defn}[$\epsilon$-individual differential privacy]
	\label{def:fixed_instance}
	Given a data set $D$, a response mechanism
	$\kappa(\cdot)$ satisfies $\epsilon$-individual differential
	privacy (or $\epsilon$-iDP) if, for any data set
	$D'$ that is a neighbor of $D$, and 
        any $S\subset Range(\kappa)$ we have 
	\[
	\begin{array}{l}
	\exp(-\epsilon)\Pr(\kappa(D')\in S)\le\Pr(\kappa(D)\in S)\\
	\text{\ensuremath{\le}}\exp(\epsilon)\Pr(\kappa(D')\in S).
	\end{array}
	\]
\end{defn}

In line with Definition~\ref{def:dp}, we require the probability
of any result to differ between neighbor data sets at most by
a factor of $\exp(\epsilon)$. However, unlike in Definition~\ref{def:dp}, 
the role of the data sets $D$ and $D'$ is not exchangeable:
$D$ refers to the actual data set, and $D'$ to a neighbor data
set of $D$. 
The asymmetry between $D$ and $D'$ is relevant, because
indistinguishability is achieved only between $D$ and its 
neighbor data sets.
As a side effect of this asymmetry, we need to explicitly
enforce an upper bound ($\Pr(\kappa(D)\in S)\text{\ensuremath{\le}}\exp(\epsilon)\Pr(\kappa(D')\in S)$)
and a lower bound ($\exp(-\epsilon)\Pr(\kappa(D')\in S))\le\Pr(\kappa(D)\in S)$).
This was not needed in Definition~\ref{def:dp} because the upper
bound could be obtained from the lower bound by exchanging the roles
of $D$ and $D'$.

To illustrate the implications of the differences between DP and iDP,
consider a data set $D$ whose records are either 0 or 1, and assume
that we query the data set for the median. We consider two cases, 
$D_{1}=\{0,0,0,0,1\}$ and $D_{2}=\{0,0,0,1,1\}$, which are
discussed next:
\begin{itemize}
\item In data set $D_{1}$, the median is $0$. Moreover, the median remains
$0$ after modification of any single record. Since the median is unaffected
by the contribution of any single individual, there is no way to learn
anything about 
an individual from the median of $D_1$. Hence, according to 
iDP, the median of $D_{1}$ can be released without any masking. This
is not the case according to DP, where the mere existence of a 
pair of neighbor
data sets (not involving $D_1$) 
with different values for the median requires masking it
(even if the median of $D_1$ discloses nothing about any 
individual). 
\item In data set $D_{2}$, the median is $0$, but a change in a single
record is enough for the median to become $1$. 
Since a single individual
may affect the result of the median, information about an individual
might be learned from the value of the median. To limit the risk
of disclosure, both DP and iDP require masking the outcome of the median.
\end{itemize}

In the previous example, we have considered boundary cases where the
change of a single record can make it necessary or unnecessary to mask the
query result. Let us now consider a different case, consisting of 
a data set $D_3$ with 99 binary records, of which 90 contain the
value 1. Clearly, no single individual
has any effect on the median of $D_3$. Thus, iDP does 
not require any masking for the answer to the median query. 
In contrast, under DP the possible
existence of
 a data set $D_4$ containing forty-nine 0s and fifty 1s has to be 
considered, in which a single individual can change the median; 
since this data set must be  
indistinguishable from its neighbors given the value of its median,
the answer to the median query must be masked even for $D_3$
(although $D_3$ and $D_4$ are far from neighbors).

Like in DP, two different notions of neighbor data sets $D$ and $D'$
 are possible in iDP. We can either think of $D'$ as a data set generated
from $D$ by modifying one record or as a data set
generated from $D$ by adding or removing one record. Although such
a choice can make some difference in terms of query answers, 
 both alternatives 
are similar as far as disclosure risk limitation is concerned. 
In the sequel, we assume that $D'$ is generated from $D$ by modifying one
record.

Strictly speaking, \emph{individual differential privacy} is a relaxation
of DP where indistinguishability is only required between the actual
data set $D$ and its neighbor data sets. Thus, the following holds: 
\begin{prop}
	Any mechanism that satisfies $\epsilon$-DP also satisfies $\epsilon$-iDP
	for any actual data set $D$. 
\end{prop}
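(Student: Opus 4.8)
The plan is to show that $\epsilon$-DP is a logically stronger condition than $\epsilon$-iDP by a straightforward specialization argument. Fix an arbitrary data set $D$; I need to verify that a mechanism $\kappa$ satisfying Definition~\ref{def:dp} satisfies Definition~\ref{def:fixed_instance} with this $D$ as the actual data set. So let $D'$ be any neighbor of $D$ and let $S \subset Range(\kappa)$.

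First I would apply the $\epsilon$-DP hypothesis to the ordered pair $(D_1, D_2) = (D, D')$: since $D$ and $D'$ differ in one record, Definition~\ref{def:dp} yields $\Pr(\kappa(D)\in S) \le \exp(\epsilon)\Pr(\kappa(D')\in S)$, which is exactly the upper bound required by iDP. Next I would apply $\epsilon$-DP a second time, now to the pair $(D_1, D_2) = (D', D)$ --- this is legitimate precisely because the neighbor relation is symmetric and the definition of DP quantifies over all ordered pairs of neighbors. This gives $\Pr(\kappa(D')\in S) \le \exp(\epsilon)\Pr(\kappa(D)\in S)$, and dividing through by $\exp(\epsilon)$ yields the lower bound $\exp(-\epsilon)\Pr(\kappa(D')\in S) \le \Pr(\kappa(D)\in S)$. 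Together these two inequalities are precisely the conjunction in Definition~\ref{def:fixed_instance}, so $\kappa$ satisfies $\epsilon$-iDP for $D$. Since $D$ was arbitrary, the mechanism satisfies $\epsilon$-iDP for every actual data set.

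There is no real obstacle here --- the only thing worth being careful about is the point already flagged in the discussion preceding the proposition: iDP separately demands an upper and a lower bound because the roles of $D$ and $D'$ are not interchangeable, whereas DP gets both from a single (symmetric) inequality family. The proof must therefore invoke the DP hypothesis twice, once for each orientation of the neighbor pair, rather than once; this is the only subtlety and it is purely bookkeeping. One might optionally remark that the converse fails --- iDP does not imply DP --- since an iDP mechanism may exploit knowledge of $D$ and add no noise at all on an input whose neighbors all share the same query value, which is exactly the $D_1 = \{0,0,0,0,1\}$ median example; but this is not needed for the proposition as stated and I would leave it to the surrounding text.
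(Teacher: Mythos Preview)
Your proposal is correct. In fact, the paper does not supply a proof of this proposition at all: it is stated as an immediate consequence of the observation that $\epsilon$-iDP is a relaxation of $\epsilon$-DP in which indistinguishability is required only between the actual data set $D$ and its neighbors. Your argument --- applying Definition~\ref{def:dp} once with $(D_1,D_2)=(D,D')$ to obtain the upper bound and once with $(D_1,D_2)=(D',D)$ to obtain the lower bound --- is precisely the unpacking of that remark, and the point you flag about needing two invocations because of the asymmetry in Definition~\ref{def:fixed_instance} matches the paper's own commentary immediately after that definition.
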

However, {\em unlike other relaxations in the literature, individual differential
	privacy preserves the strict privacy guarantees that DP gives to 
	individuals}.
In this sense, rather than being regarded as a relaxation, individual
differential privacy can be construed as a more precise formalization
of the intuitive notion of privacy described in Section~\ref{sub:intuitive}.

Even though we can satisfy $\epsilon$-iDP by resorting to the mechanisms
commonly used to satisfy $\epsilon$-DP, doing so would squander the
potential accuracy gains of $\epsilon$-iDP. The reason is that, as
discussed in Section~\ref{sec:dp}, being able to adjust the noise
to the actual data set may significantly reduce its magnitude. In
Section~\ref{sec:mechanism}, we design mechanisms that are specific
to $\epsilon$-iDP.

\subsection{Disclosure limitation for groups of data subjects}

If $\epsilon$-iDP offers all the intuitive privacy guarantees
of $\epsilon$-differential privacy 
while allowing more accuracy, it is because
$\epsilon$-iDP offers {\em nothing more} than those intuitive
guarantees. Let us discuss this issue here.

Even if DP seeks to protect privacy by limiting the impact of \emph{each
single} individual on a query response, Definition~\ref{def:dp}
also results in (limited) privacy guarantees for \emph{groups} of
individuals. For example, if data sets $D_{1}$ and $D_{3}$ differ
in two individuals (two records), by considering the intermediate
data set $D_{2}$ that differs in one individual from $D_{1}$ and
in one individual from $D_{3}$, we have 
\[
\begin{array}{c}
\Pr(\kappa(D_{1})\in S)\le\exp(\epsilon)\Pr(\kappa(D_{2})\in S),\\
\Pr(\kappa(D_{2})\in S)\le\exp(\epsilon)\Pr(\kappa(D_{3})\in S),
\end{array}
\]
which results in 
\[
\Pr(\kappa(D_{1})\in S)\le\exp(2\epsilon)\Pr(\kappa(D_{3})\in S).
\]

That is, $\epsilon$-DP guarantees that the knowledge gain for groups
of two individuals is limited by the factor $\exp(2\epsilon)$. More
generally, the knowledge gain for a group of $n$ individuals is limited
by the factor $\exp(n\epsilon)$.

In contrast, with iDP, the fact that the response mechanism is adjusted
to each concrete data set makes the concatenation of inequalities
not possible. Specifically, for the data sets $D_{1}$, $D_{2}$ and
$D_{3}$ above, iDP provides the upper bounds
\[
\begin{array}{c}
\Pr(\kappa(D_{1})\in S)\le\exp(\epsilon)\Pr(\kappa(D_{2})\in S),\\
\Pr(\kappa'(D_{2})\in S)\le\exp(\epsilon)\Pr(\kappa'(D_{3})\in S),
\end{array}
\]
where $\kappa$ is a mechanism whose distortion is calibrated on 
$D_1$ and $\kappa'$
is a mechanism whose distortion is calibrated on $D_2$.
Note that the probability on the right-hand side of the first inequality
does not match the probability on the left-hand side of the second
inequality.

Therefore, with iDP we lose any {\em direct} privacy guarantees
for groups of individuals. In other words, any disclosure risk limitation
for a group that may subsist under $\epsilon$-iDP is an indirect
consequence of the disclosure risk limitation obtained by each individual
in the group.

Direct group privacy guarantees not being part of the intuitive notion
of DP (which is aimed at individuals), but rather a by-product of
the formalization in Definition~\ref{def:dp}, we drop such group
guarantees in order to improve data utility.
However, the proposed iDP could be modified to 
retain privacy guarantees for
groups of individuals, by requiring indistinguishability between the
data set $D$ and the data sets $D'$ that differ from $D$ in a group
of records, as follows. 

\begin{defn}[$(\epsilon_1,\ldots,\epsilon_n)$-group differential 
privacy]
\label{def:gDP}
Given a data set $D$, a response mechanism $\kappa$ satisfies
 $(\epsilon_1,\ldots,\epsilon_n)$-group
differential privacy (denoted $(\epsilon_1,\ldots,\epsilon_n)$-gDP)
if, for any data set $D'$ at distance $i \in \{1,\ldots,n\}$
from $D$, and any $S\subset Range(\kappa)$
it holds
\[
     \begin{array}{l}
        \exp(-\epsilon_i)\Pr(\kappa(D')\in S)\le\Pr(\kappa(D)\in S)\\
        \text{\ensuremath{\le}}\exp(\epsilon_i)\Pr(\kappa(D')\in S).
        \end{array}
\]
\end{defn}

Notice that, in general, the level of protection for groups
should be smaller than for individuals. For example, in DP, the $\epsilon$
parameter grows linearly with the cardinality of the group and, thus,
the level of indistinguishability decreases exponentially with the
size of the group. As a result, even in standard DP, 
protection for groups
of individuals is only noticeable for small values of $\epsilon$ 
(which on the other hand severely damage data utility) 
and for small groups. 
To attain an equivalent level of protection with
gDP, 
it must be $\epsilon_i=i\epsilon$ for $i \in \mathbb{N}$.

\subsection{Composition theorems}

The composition theorems for DP described in Section~\ref{sec:dp}
are also trivially satisfied by individual differential privacy.
\begin{thm}[Sequential composition]
Let $\kappa_{1}$ be a mechanism giving $\epsilon_{1}$-iDP and
$\kappa_{2}$ a mechanism giving $\epsilon_{2}$-iDP. Then, any
deterministic function of $(\kappa_{1},\kappa_{2})$ gives $\epsilon_{1}+\epsilon_{2}$-iDP. 
\end{thm}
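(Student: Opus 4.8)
The plan is to follow the classical proof of sequential composition for DP (Theorem~\ref{teo1}) almost verbatim, keeping in mind that $\epsilon$-iDP is stated as an explicit two-sided bound, so both the upper and the lower inequality have to be propagated. Fix the actual data set $D$ (the same $D$ with respect to which $\kappa_1$ and $\kappa_2$ are $\epsilon_1$- and $\epsilon_2$-iDP) and let $D'$ be any neighbor of $D$. For clarity I would argue in the discrete case, the general case being identical with densities replacing point masses.

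The first step is to dispose of the deterministic post-processing. Writing $\kappa = g(\kappa_1,\kappa_2)$ for a deterministic map $g$, for any $S\subseteq Range(\kappa)$ one has the event identity $\{\kappa(D)\in S\}=\{(\kappa_1(D),\kappa_2(D))\in g^{-1}(S)\}$, so it suffices to establish the two-sided bound for the joint mechanism $\tilde\kappa(\cdot)=(\kappa_1(\cdot),\kappa_2(\cdot))$ and an arbitrary $T\subseteq Range(\kappa_1)\times Range(\kappa_2)$; the claimed bound for $\kappa$ then follows by specializing to $T=g^{-1}(S)$. For the joint mechanism, assuming the internal randomness of $\kappa_1$ and $\kappa_2$ is independent, I would factor
\[
\Pr(\tilde\kappa(D)\in T) = \sum_{(a,b)\in T}\Pr(\kappa_1(D)=a)\,\Pr(\kappa_2(D)=b),
\]
apply $\Pr(\kappa_1(D)=a)\le\exp(\epsilon_1)\Pr(\kappa_1(D')=a)$ and $\Pr(\kappa_2(D)=b)\le\exp(\epsilon_2)\Pr(\kappa_2(D')=b)$ term by term, and resum to obtain $\Pr(\tilde\kappa(D)\in T)\le\exp(\epsilon_1+\epsilon_2)\Pr(\tilde\kappa(D')\in T)$. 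The lower bound $\Pr(\tilde\kappa(D)\in T)\ge\exp(-(\epsilon_1+\epsilon_2))\Pr(\tilde\kappa(D')\in T)$ comes out the same way from the lower bounds supplied by $\epsilon_1$-iDP and $\epsilon_2$-iDP.

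I do not expect any genuine obstacle: the content of the argument is simply that iDP, like DP, is preserved under products of independent mechanisms and under deterministic post-processing. The one point worth an explicit remark is that, in contrast with Definition~\ref{def:dp}, the lower inequality in Definition~\ref{def:fixed_instance} does not follow from the upper one by swapping $D$ and $D'$, so it must be checked separately --- but since each component mechanism already supplies it, this is free. It is also worth flagging that the theorem as stated is the non-adaptive composition, i.e.\ $\kappa_1$ and $\kappa_2$ are run on $D$ with independent coins; an adaptive variant in which $\kappa_2$ consumes the output of $\kappa_1$ would instead need the standard conditioning argument together with the hypothesis that $\kappa_2(\cdot,r)$ is $\epsilon_2$-iDP for every fixed first output $r$.
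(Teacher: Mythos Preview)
Your proposal is correct and is exactly the ``straightforward adaptation'' of the standard DP sequential-composition proof that the paper alludes to; indeed, the paper does not spell out a proof at all, simply remarking that it follows by adapting Theorem~\ref{teo1}. Your explicit handling of both the upper and lower inequalities and the reduction of the deterministic post-processing to a preimage set are precisely the details one would supply, so there is nothing to add.
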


\begin{thm}[Parallel composition]
Let $\kappa_{1}$ and $\kappa_{2}$ be mechanisms giving $\epsilon$-iDP.
If $\kappa_{1}$ and $\kappa_{2}$ are applied to {\em disjoint}
data sets or subsets of records, any deterministic function of $(\kappa_{1},\kappa_{2})$
gives $\epsilon$-iDP. 
\end{thm}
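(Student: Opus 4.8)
The plan is to reduce the statement to two routine facts: that the joint mechanism $(\kappa_1,\kappa_2)$ is $\epsilon$-iDP, and that $\epsilon$-iDP is closed under deterministic post-processing. First I would fix the actual data set $D$ together with the partition into disjoint blocks $D=D^{(1)}\cup D^{(2)}$ on which $\kappa_1$ and $\kappa_2$ act, so that by hypothesis $\kappa_1$ is $\epsilon$-iDP with actual data set $D^{(1)}$ and $\kappa_2$ is $\epsilon$-iDP with actual data set $D^{(2)}$. Let $D'$ be any neighbor of $D$, i.e. $D'$ differs from $D$ in a single record. Because the two blocks are disjoint, that modified record lies entirely in one block, say $D^{(1)}$; hence $D'^{(1)}$ is a neighbor of $D^{(1)}$ while $D'^{(2)}=D^{(2)}$. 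The case where the record lies in $D^{(2)}$ is symmetric, so it suffices to treat this one.

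Second I would prove that $(\kappa_1,\kappa_2)$ is $\epsilon$-iDP. Since $\kappa_1$ and $\kappa_2$ use independent internal randomness, for any measurable $T\subseteq Range(\kappa_1)\times Range(\kappa_2)$ the probability $\Pr((\kappa_1(D^{(1)}),\kappa_2(D^{(2)}))\in T)$ equals the integral over $y$ of $\Pr(\kappa_1(D^{(1)})\in T_y)$ against the law of $\kappa_2(D^{(2)})$, where $T_y$ is the $y$-section of $T$. Because $D'^{(2)}=D^{(2)}$, this integrating measure does not change when we pass to $D'$. For each fixed $y$, applying the $\epsilon$-iDP guarantee of $\kappa_1$ to the measurable set $T_y$ gives $\exp(-\epsilon)\Pr(\kappa_1(D'^{(1)})\in T_y)\le\Pr(\kappa_1(D^{(1)})\in T_y)\le\exp(\epsilon)\Pr(\kappa_1(D'^{(1)})\in T_y)$. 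Integrating the upper inequality against the common measure yields $\Pr((\kappa_1(D^{(1)}),\kappa_2(D^{(2)}))\in T)\le\exp(\epsilon)\Pr((\kappa_1(D'^{(1)}),\kappa_2(D'^{(2)}))\in T)$, and integrating the lower inequality yields the matching lower bound; this is exactly $\epsilon$-iDP for the joint mechanism with actual data set $D$.

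Third I would finish with post-processing. If $g$ is any deterministic (measurable) function of the two outputs, then for $S\subseteq Range(g)$ the event $\{g(\kappa_1,\kappa_2)\in S\}$ coincides with $\{(\kappa_1,\kappa_2)\in g^{-1}(S)\}$ and $g^{-1}(S)$ is measurable, so the two-sided inequality just established applies verbatim with $T=g^{-1}(S)$, giving $\epsilon$-iDP for $g(\kappa_1,\kappa_2)$. I expect the only delicate point to be the measure-theoretic bookkeeping in the slicing step — checking that $T_y$ is measurable for (almost) every $y$ and that Fubini/Tonelli applies — which is routine given the independence of the noise; everything else transcribes the standard DP parallel-composition argument, the one genuine novelty being that the lower bound must be carried alongside the upper bound throughout, since iDP is asymmetric in $D$ and $D'$.
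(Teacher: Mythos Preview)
Your proposal is correct and follows exactly the approach the paper has in mind: the paper does not actually spell out a proof, stating only that it is a ``straightforward adaptation'' of the standard DP parallel-composition argument, and your write-up is precisely that adaptation, with the appropriate extra care of tracking both the upper and lower bounds required by the asymmetric iDP definition.
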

The proofs of the theorems for iDP are not detailed because they are
straightforward adaptations of the proofs of Theorems~\ref{teo1}
and~\ref{teo2} for DP.

\subsection{Privacy axioms}

In~\cite{kifer}, an axiomatization of the notions of privacy 
and utility is proposed. In particular,
two desirable properties of privacy models are described: the axiom 
of transformation
invariance and the privacy axiom of choice.

The axiom of transformation invariance says that 
post-processing of sanitized data
must be safe as long as no sensitive information is incorporated 
into the post-processing.

\begin{defn}[Axiom of transformation invariance] 
	Suppose we
	have a privacy definition, a privacy mechanism $M$ that satisfies
	this definition, and a randomized algorithm $A$ whose
	input space is the output space of $M$ and whose randomness
	is independent of both the data and the randomness in
	$M$. Then $M' = A \circ M$ must also be a privacy mechanism
	satisfying that privacy definition.
\end{defn}

\begin{prop}
	$\epsilon$-iDP satisfies the axiom of transformation invariance.
\end{prop}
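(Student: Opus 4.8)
The plan is to verify that the composition $M' = A \circ M$ satisfies $\epsilon$-iDP directly from the definition, by pushing the iDP inequalities for $M$ through the post-processing algorithm $A$. Fix the actual data set $D$ and let $D'$ be any neighbor of $D$. We must show that for every measurable $T \subset Range(M')$, the two-sided bound $\exp(-\epsilon)\Pr(M'(D') \in T) \le \Pr(M'(D) \in T) \le \exp(\epsilon)\Pr(M'(D') \in T)$ holds. Since $A$'s randomness is independent of the data and of the randomness in $M$, conditioning on the value $z = M(D)$ (resp.\ $z = M(D')$) gives $\Pr(M'(D) \in T) = \int \Pr(A(z) \in T)\, d\mu_{M(D)}(z)$, where $\mu_{M(D)}$ is the output distribution of $M$ on input $D$, and similarly for $D'$.

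The key step is then to define $S = \{z \in Range(M) : \text{the quantity } \Pr(A(z)\in T) \text{ is the integrand}\}$ — more precisely, to observe that the function $z \mapsto \Pr(A(z) \in T)$ is a fixed nonnegative measurable function bounded by $1$, independent of the data. I would argue the upper bound by a standard layer-cake / monotonicity argument: since $\mu_{M(D)}(S') \le \exp(\epsilon)\,\mu_{M(D')}(S')$ for every measurable $S' \subset Range(M)$ (this is exactly the iDP guarantee of $M$ for the pair $D, D'$), integrating the nonnegative function $g(z) = \Pr(A(z) \in T)$ against the two measures preserves the inequality, i.e.\ $\int g\, d\mu_{M(D)} \le \exp(\epsilon) \int g\, d\mu_{M(D')}$. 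This yields $\Pr(M'(D) \in T) \le \exp(\epsilon)\Pr(M'(D') \in T)$. The lower bound follows identically from the other half of the iDP inequality for $M$, namely $\exp(-\epsilon)\,\mu_{M(D')}(S') \le \mu_{M(D)}(S')$.

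The one subtlety worth flagging — the ``main obstacle,'' though it is more a matter of care than of difficulty — is that iDP is \emph{asymmetric} in $D$ and $D'$, so unlike in the DP case one cannot derive the upper bound from the lower bound by swapping roles; both inequalities of Definition~\ref{def:fixed_instance} must be transported through $A$ separately. Fortunately the monotone-integration argument applies verbatim to each, so no real difficulty arises. A second minor point is the measure-theoretic justification that integrating a bounded nonnegative measurable function preserves a pointwise-on-sets inequality between measures; this is routine (approximate $g$ from below by simple functions $\sum_i c_i \mathbf{1}_{S_i}$ and apply the set inequality termwise, then take a limit), and I would state it in one line rather than belabor it. Since $A$'s randomness is independent of $M$'s, the conditional distribution of $A$ given its input does not depend on whether that input came from $D$ or $D'$, which is precisely what makes $g$ a single fixed function and lets the whole argument go through.
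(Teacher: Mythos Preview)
Your proposal is correct and follows essentially the same conditioning argument as the paper: express $\Pr(A(M(d)) \in T)$ by integrating $\Pr(A(z)\in T)$ against the output distribution of $M(d)$, then invoke the $\epsilon$-iDP inequalities for $M$. Your treatment is in fact more careful than the paper's brief sketch, particularly in flagging the asymmetry of iDP (so that both bounds must be transported through $A$ separately) and in spelling out the monotone-integration justification.
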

\begin{proof}
	Let $M$ be an $\epsilon$-iDP mechanism, and let $A$ be a randomized algorithm
	whose input space is the output space of $M$ and whose randomness
	is independent of both the data and the randomness in
	$M$. According to~\cite{kifer}, we can think of $M$ and $A$ as returning 
	independent probability distributions. That is, $M:\mathcal{D} \rightarrow \mbox{Dist}(E)$  
	assigns to each data set $D$ a probability distribution over $E$ such that, for each $D'$ that differs in
	one record from $D$, we have $\exp(-\epsilon)\Pr(M(D')\in S)$ $\le \Pr(M(D)\in S)$ $\le \exp(\epsilon)\Pr(M(D')\in S)$.
	On the other hand, $A:E\rightarrow \mbox{Dist}(F)$ assigns to each $e\in E$ a probability distribution over $F$.

	To check that $\exp(-\epsilon)\Pr(A(M(D')\in S))$ $\le \Pr(A(M(D))\in S)$ $\le \exp(\epsilon)\Pr(A(M(D'))\in S)$, we decompose $\Pr(A(M(d))\in S)$ as $\int_{x\in S}\Pr(M(d)=y)\Pr(A(y)=x)$ and we use that $M$ satisfies $\epsilon$-iDP.
\end{proof}

The privacy axiom of choice allows randomly choosing between $\epsilon$-iDP mechanisms.
\begin{defn}[Privacy axiom of choice] 
	Let $M_1$ and $M_2$ be privacy mechanisms that satisfy
	a certain privacy definition. For any $p\in [0, 1]$, let $M_p$ be
	a randomized algorithm that, on input $i$, outputs $M_1(i)$ with
	probability $p$ (independent of the data and the randomness
	in $M_1$ and $M_2$) and $M_2(i)$ with probability $1-p$.
	 Then $M_p$
	is a privacy mechanism that satisfies the privacy definition.
\end{defn}

\begin{prop}
	$\epsilon$-iDP satisfies the privacy axiom of choice.
\end{prop}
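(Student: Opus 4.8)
The plan is to mirror the structure of the proof of the transformation-invariance axiom, decomposing the mixture mechanism $M_p$ into an integral over the outputs of $M_1$ and $M_2$ and then invoking the $\epsilon$-iDP inequalities for each of them separately. First I would fix an actual data set $D$ and a neighbor $D'$ differing from $D$ in one record, and fix a measurable set $S \subset Range(M_p)$. The key observation is that, since the coin flip selecting between $M_1$ and $M_2$ is independent of the data and of the internal randomness of $M_1$ and $M_2$, the law of total probability gives
\[
\Pr(M_p(D) \in S) = p\,\Pr(M_1(D) \in S) + (1-p)\,\Pr(M_2(D) \in S),
\]
and likewise for $D'$.

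Next I would apply the $\epsilon$-iDP guarantees of $M_1$ and $M_2$ term by term. Because $M_1$ satisfies $\epsilon$-iDP with respect to the actual data set $D$, we have $\Pr(M_1(D) \in S) \le \exp(\epsilon)\Pr(M_1(D') \in S)$, and similarly $\Pr(M_2(D) \in S) \le \exp(\epsilon)\Pr(M_2(D') \in S)$. Multiplying the first by $p \ge 0$ and the second by $1-p \ge 0$ and adding, monotonicity of the sum yields
\[
\Pr(M_p(D) \in S) \le \exp(\epsilon)\bigl(p\,\Pr(M_1(D') \in S) + (1-p)\,\Pr(M_2(D') \in S)\bigr) = \exp(\epsilon)\,\Pr(M_p(D') \in S).
\]
The lower bound $\exp(-\epsilon)\Pr(M_p(D') \in S) \le \Pr(M_p(D) \in S)$ follows identically from the lower-bound halves of the $\epsilon$-iDP guarantees of $M_1$ and $M_2$. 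Since $D'$ was an arbitrary neighbor of $D$ and $S$ an arbitrary subset of the range, $M_p$ satisfies $\epsilon$-iDP for the actual data set $D$.

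I do not anticipate a genuine obstacle here: the only subtlety — and the one point worth stating explicitly — is that the independence of the selection coin from both the data and the randomness of $M_1, M_2$ is exactly what licenses the total-probability decomposition, just as in the transformation-invariance proof. One should also note that, unlike in standard DP, the inequalities only need to hold with the actual data set $D$ on the distinguished side, which is automatically respected because we never swap the roles of $D$ and $D'$; the convex combination preserves the one-sided bounds in both directions since $p$ and $1-p$ are nonnegative. If one wants to match the integral notation of the preceding proof, the same argument can be written as $\Pr(M_p(d) \in S) = \int p\,\Pr(M_1(d) = x) + (1-p)\,\Pr(M_2(d) = x)\, dx$ over $x \in S$, but the discrete total-probability form above is cleaner and suffices.
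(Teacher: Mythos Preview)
Your proof is correct and follows exactly the same approach as the paper's own proof: decompose $\Pr(M_p(D)\in S)$ as the convex combination $p\,\Pr(M_1(D)\in S)+(1-p)\,\Pr(M_2(D)\in S)$ and apply the $\epsilon$-iDP bounds of $M_1$ and $M_2$ termwise. You have in fact written out the details more explicitly than the paper, which merely states that the result ``is easily seen'' from this decomposition.
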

\begin{proof}
	Let $M_1$ and $M_2$ be $\epsilon$-iDP mechanisms, and let $p\in [0, 1]$.
	Let $M_p$ be a mechanism that, on input $i$, outputs $M_1(i)$ with
	probability $p$ and $M_2(i)$ with probability $1 - p$.
	
	We want to check that $M_p$ is $\epsilon$-iDP. That is, for any $D'$
	neighbor of $D$, we have 
	$\exp(-\epsilon)\Pr(M_p(D')\in S)$ $\le \Pr(M_p(D)\in S)$ $\le \exp(\epsilon)\Pr(M_p(D')\in S)$.
	This is easily seen by expressing $\Pr(M_p(D)\in S)$ as 
	$p\times \Pr(M_1(D)\in S) + (1-p)\times\Pr(M_2(D)\in S)$
	and using that both $M_1$ and $M_2$ satisfy $\epsilon$-iDP.
\end{proof}

\section{$\epsilon$-iDP for Numerical Queries\label{sec:mechanism}}

In DP, response mechanisms for numerical queries are often expressed
in terms of noise addition (see Section~\ref{sec:dp}). In this line,
given a numerical query $f$, this section presents an $\epsilon$-iDP
mechanism of the form $\kappa(x)=f(x)+N$, where $N$ is a random
noise.

For $\kappa(x)=f(x)+N$ to be $\epsilon$-iDP, we have to
make sure that the result of the mechanism is indistinguishable between
$D$ and its neighbor data sets. As we show later, this can be
done by adjusting the noise to the \emph{local sensitivity} $LS_{f}(D)$
(see Definition~\ref{def:local}). The ability to calibrate the noise
to the local sensitivity rather than to the global or smooth sensitivities
is a significant advantage to preserve data utility: 
first, many functions have small 
local sensitivity but large global sensitivity, as noted
right after Definition~\ref{def:local} above; second,
calibration to the local sensitivity also improves the accuracy of query
responses with respect to calibration to the smooth sensitivity,
not only because  
the local sensitivity is smaller than the smooth sensitivity,
but also because calibration to
the local sensitivity allows using exponentially decreasing noise
distributions (rather than the heavy-tailed distributions required
by calibration to the smooth sensitivity). Last but not least, 
calibration to the local sensitivity is simpler than calibration
to the smooth sensitivity.

We next detail how to enforce iDP for both continuous and discrete
numerical results.

\subsection{Laplace mechanism\label{sub:lap}}

The $Laplace(\mu,b)$ 
distribution (a.k.a the double exponential distribution)
is an absolutely continuous distribution whose density function
is 
\[
L{}_{\mu,b}(x)=\frac{1}{2b}\exp\left(-\frac{|x-\mu|}{b}\right),
\]
where $\mu$ is the location parameter, and $b$ is the scale parameter.
We use $L_{b}(x)$ when the mean parameter is zero. 
 
In case of using the Laplace distribution to generate the random noise,
iDP can be rewritten as follows.

\begin{prop}
\label{prop:iDP_Lap}Let $f$ be a query function that takes values
in $\mathbb{R}^{k}$. The mechanism $\kappa(x)=f(x)+(N_{1},\ldots,N_{k})$,
where $N_{i}$ are independent identically distributed 
$Laplace(0,LS_{f}(D)/\epsilon)$
random noises, gives $\epsilon$-iDP. \end{prop}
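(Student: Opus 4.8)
The plan is to mimic the classical proof that the Laplace mechanism achieves $\epsilon$-DP, but with the global sensitivity $\Delta f$ replaced by the local sensitivity $LS_{f}(D)$ at the \emph{actual} data set $D$. Fix $D$ and put $b=LS_{f}(D)/\epsilon$; assume for the moment that $LS_{f}(D)>0$ (the degenerate case is handled at the end). Because the $N_{i}$ are independent, for each $x$ the output $\kappa(x)=f(x)+(N_{1},\ldots,N_{k})$ has density $p_{x}(z)=\prod_{i=1}^{k}L_{b}\bigl(z_{i}-f(x)_{i}\bigr)$ at $z=(z_{1},\ldots,z_{k})\in\mathbb{R}^{k}$, where $L_{b}$ is the centred Laplace density introduced above.

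First I would bound the pointwise density ratio. For a neighbor $D'$ of $D$ and any $z$,
\[
\frac{p_{D}(z)}{p_{D'}(z)}=\prod_{i=1}^{k}\exp\!\left(\frac{|z_{i}-f(D')_{i}|-|z_{i}-f(D)_{i}|}{b}\right)\le\exp\!\left(\frac{\|f(D)-f(D')\|_{1}}{b}\right),
\]
the inequality being the coordinatewise triangle inequality $|z_{i}-f(D')_{i}|-|z_{i}-f(D)_{i}|\le|f(D)_{i}-f(D')_{i}|$ summed over $i$. By Definition~\ref{def:local}, since $D'$ differs from $D$ in one record we have $\|f(D)-f(D')\|_{1}\le LS_{f}(D)$, so the ratio is at most $\exp(LS_{f}(D)/b)=\exp(\epsilon)$. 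Running the same computation with the roles of $D$ and $D'$ interchanged in the exponent gives $p_{D'}(z)/p_{D}(z)\le\exp(\epsilon)$, i.e. $p_{D}(z)/p_{D'}(z)\ge\exp(-\epsilon)$.

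Finally I would integrate these pointwise bounds over any measurable $S\subset\mathbb{R}^{k}$: $\Pr(\kappa(D)\in S)=\int_{S}p_{D}(z)\,dz\le\exp(\epsilon)\int_{S}p_{D'}(z)\,dz=\exp(\epsilon)\Pr(\kappa(D')\in S)$, and symmetrically $\Pr(\kappa(D)\in S)\ge\exp(-\epsilon)\Pr(\kappa(D')\in S)$, which is exactly Definition~\ref{def:fixed_instance}. There is no deep obstacle here; the two things to get right are (i) that the scale $b$ is keyed to the \emph{actual} data set $D$ and therefore stays fixed in \emph{both} $p_{D}$ and $p_{D'}$ — this is precisely why the upper and the lower bound both drop out of a single symmetric triangle-inequality estimate, even though the general iDP definition is asymmetric in $D$ and $D'$ — and (ii) the boundary case $LS_{f}(D)=0$, in which $\kappa(D)=f(D)$ deterministically and every neighbor $D'$ has $f(D')=f(D)$, so $\kappa(D)$ and $\kappa(D')$ share the same point-mass distribution and $\epsilon$-iDP holds trivially for every $\epsilon$.
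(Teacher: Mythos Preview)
Your proof is correct and follows essentially the same route as the paper: bound the pointwise Laplace density ratio by $\exp(\pm\|f(D)-f(D')\|_{1}/b)$ via the triangle inequality, then invoke $\|f(D)-f(D')\|_{1}\le LS_{f}(D)$ and $b=LS_{f}(D)/\epsilon$. Your write-up is actually more complete than the paper's, since you spell out the integration step to pass from densities to events and explicitly dispose of the degenerate case $LS_{f}(D)=0$.
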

\begin{proof}
If $N$ is a vector of independent $Laplace(0,\beta)$ distributed
variables, we know that for all $s\in\mathbb{R}^{k}$
\[
\exp\left(-\frac{\left\Vert z-z'\right\Vert _{1}}{\beta}\right)\le\frac{\Pr(z+N=s)}{\Pr(z'+N=s)}\le\exp\left(\frac{\left\Vert z-z'\right\Vert _{1}}{\beta}\right).
\]

Let $D'$ be a data set that differs from $D$ in one record. By taking
$z=f(D)$, $z'=f(D')$, and $\beta=LS_{f}(D)/\epsilon$ we get

\[
\exp(-\epsilon)\le\frac{\Pr(f(D)+N=s)}{\Pr(f(D')+N=s)}\le\exp(\epsilon).
\]

\end{proof}

\subsection{Discrete Laplace mechanism\label{sub:discrete}}

The previous mechanism (based on adding random noise with
values in $\mathbb{R}$) is also capable of providing DP to query
functions with values in $\mathbb{Z}$. However, for such query functions
the use of a noise distribution with support over $\mathbb{Z}$ is
a better option. The discrete version of the Laplace distribution
is defined as follows.
\begin{defn}[Discrete Laplace distribution~\cite{Inusah2006}]
A random variable $N$ follows the discrete Laplace distribution
with parameter $\alpha\in(0,1)$, denoted by $DL(\alpha)$, if for
all $k\in\mathbb{Z}$, 
\[
\Pr(N=i)=\frac{1-\alpha}{1+\alpha}\alpha^{|i|}.
\]

\end{defn}
The discrete Laplace distribution can also be used to attain $\epsilon$-iDP.
To this end, the parameter $\alpha$ must be adjusted to the desired
privacy level $\epsilon$ and to the local sensitivity $LS_{f}(D)$
of the query.
\begin{thm}[The discrete Laplace mechanism]
Let $f$ be a function with values in $\mathbb{Z}^{k}$. The discrete
mechanism $\kappa_{D}(x)=f(x)+N$, where $N=(N_{1},\ldots,N_{k})$
and $N_{i}\sim DL(\exp(-\epsilon/LS_{f}(D)))$ are independent random
variables, gives
 $\epsilon$-iDP.\end{thm}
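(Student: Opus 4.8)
The plan is to mirror the proof of Proposition~\ref{prop:iDP_Lap}, replacing the continuous Laplace density by the discrete Laplace probability mass function and tracking how the ratio of probabilities behaves under a shift by an integer vector. First I would establish the one-dimensional shift bound: if $N\sim DL(\alpha)$ and $a,b\in\mathbb{Z}$, then for every $s\in\mathbb{Z}$
\[
\alpha^{|a-b|}\le\frac{\Pr(a+N=s)}{\Pr(b+N=s)}\le\alpha^{-|a-b|},
\]
which follows immediately from $\Pr(a+N=s)/\Pr(b+N=s)=\alpha^{|s-a|-|s-b|}$ together with the triangle-inequality estimate $\bigl||s-a|-|s-b|\bigr|\le|a-b|$ and the fact that $0<\alpha<1$. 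Next I would lift this to the $k$-dimensional product: since the $N_i$ are independent, $\Pr(z+N=s)=\prod_{i=1}^{k}\Pr(z_i+N_i=s_i)$, so the multivariate ratio factors and the exponents add, giving
\[
\alpha^{\left\Vert z-z'\right\Vert_{1}}\le\frac{\Pr(z+N=s)}{\Pr(z'+N=s)}\le\alpha^{-\left\Vert z-z'\right\Vert_{1}}
\]
for all $z,z'\in\mathbb{Z}^{k}$ and all $s\in\mathbb{Z}^{k}$.

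Then I would specialize to the iDP setting exactly as in the continuous case: fix the actual data set $D$ and let $D'$ be any neighbor of $D$. Set $z=f(D)$, $z'=f(D')$ (both in $\mathbb{Z}^{k}$ since $f$ takes values in $\mathbb{Z}^{k}$), so that $\left\Vert z-z'\right\Vert_{1}=\left\Vert f(D)-f(D')\right\Vert_{1}\le LS_{f}(D)$ by Definition~\ref{def:local}. Plugging in $\alpha=\exp(-\epsilon/LS_{f}(D))$ gives $\alpha^{\left\Vert z-z'\right\Vert_{1}}\ge\alpha^{LS_{f}(D)}=\exp(-\epsilon)$ and likewise $\alpha^{-\left\Vert z-z'\right\Vert_{1}}\le\exp(\epsilon)$, because $0<\alpha<1$ makes $\alpha^{t}$ decreasing in $t$. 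Since $\kappa_{D}(D)=f(D)+N$ and $\kappa_{D}(D')=f(D')+N$, this yields
\[
\exp(-\epsilon)\le\frac{\Pr(\kappa_{D}(D)=s)}{\Pr(\kappa_{D}(D')=s)}\le\exp(\epsilon)
\]
for every $s\in\mathbb{Z}^{k}$; summing over $s\in S$ for an arbitrary $S\subset Range(\kappa_{D})=\mathbb{Z}^{k}$ then gives the two-sided bound in Definition~\ref{def:fixed_instance}, which is precisely $\epsilon$-iDP.

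The routine steps are the pmf-ratio computation and the summation over $S$; the only point that needs a little care is the direction of the inequalities, namely remembering that $\alpha<1$ so that larger $l_{1}$-distance means a ratio closer to (or past) the boundary $\exp(\pm\epsilon)$, and that the relevant bound on $\left\Vert f(D)-f(D')\right\Vert_{1}$ is $LS_{f}(D)$ rather than the global sensitivity — this is exactly where the asymmetric iDP definition, as opposed to standard DP, is being used. A minor degenerate case worth a one-line remark is $LS_{f}(D)=0$, where no noise is needed and the statement holds trivially since $f(D)=f(D')$ for all neighbors $D'$. Beyond that there is no real obstacle: the argument is the discrete transcription of the continuous Laplace proof already given.
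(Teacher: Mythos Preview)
Your proposal is correct and follows essentially the same approach as the paper: establish the two-sided bound $\alpha^{\Vert z-z'\Vert_1}\le \Pr(z+N=s)/\Pr(z'+N=s)\le \alpha^{-\Vert z-z'\Vert_1}$ for independent $DL(\alpha)$ noises, then specialize with $z=f(D)$, $z'=f(D')$, $\alpha=\exp(-\epsilon/LS_f(D))$. You are actually more careful than the paper, which simply asserts the shift bound and the final inequality without spelling out the one-dimensional pmf ratio, the product structure, the use of $\Vert f(D)-f(D')\Vert_1\le LS_f(D)$, the summation over $S$, or the degenerate case $LS_f(D)=0$.
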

\begin{proof}
If $N$ is a vector of independent $DL(\alpha)$ distributed variables,
we know that for all $s\in\mathbb{Z}^{k}$
\[
\alpha^{\left\Vert z-z'\right\Vert _{1}}\le\frac{\Pr(z+N=s)}{\Pr(z'+N=s)}\le\alpha^{-\left\Vert z-z'\right\Vert _{1}}.
\]

Let $D'$ be a data set that differs from $D$ in one record. By taking
$z=f(D)$, $z'=f(D')$ and  
$\alpha=\exp(-\epsilon/LS_{f}(D))$,
we get

\[
\exp(-\epsilon)\le\frac{\Pr(f(D)+N=s)}{\Pr(f(D')+N=s)}\le\exp(\epsilon).
\]

\end{proof}

\section{Evaluation\label{sec:Evaluation}}

In this section, we compare the accuracy obtained with the standard
DP definition and with iDP 
for basic noise addition mechanisms. In particular, we compare 
calibration to the global sensitivity (for DP), to the smooth sensitivity (for DP), 
and to the local sensitivity (for iDP).
Noise addition mechanisms provide simple ways to make 
an output compliant with DP (or iDP) and
are agnostic to the specific data uses; 
hence, using standard noise
addition mechanisms is relevant to make a general comparison 
between DP and iDP. 

Since the accuracy we obtain from standard DP calibrated to the
smooth sensitivity and from iDP calibrated to the local sensitivity depends
on the actual data set, a purely empirical evaluation carried out
on some sample data sets would only provide a partial picture. 
Therefore, we give both an empirical analysis based on simulations
and 
a theoretical analysis.
We will see that, 
even in the worst case,
iDP offers a significant improvement on the accuracy of the responses.

\subsection{Median}

The median is a measure of central tendency. Compared to the arithmetic
mean, the median is quite stable, because it is relatively insensitive
to outliers. Since statistical queries like the median represent properties
of the data set rather than properties of a specific individual, we
could expect DP to provide accurate results for such queries. However,
the fact that standard DP guarantees must hold for {\em
any} pair of neighbor data sets forces noise addition
mechanisms to significantly degrade accuracy.
Instead, iDP allows adjusting the amount
of protection to each specific data set; thus, 
most of the times,
iDP yields reasonably high accuracy, as we discuss below.

\subsubsection{Differential privacy via calibration to the global sensitivity}

Let the values $x_{1},\cdots,x_{n}$, with $n=2m+1$, be instances
of an attribute with domain $Dom(A)$. Consider data sets $D_{1}$
and $D_{2}$ such that: in $D_{1}$ values $x_{1},\ldots,x_{m}$ are
equal to the minimum of $Dom(A)$, and values $x_{m+1},\ldots,x_{n}$
are equal to the maximum of $Dom(A)$; and in $D_{2}$ values $x_{1},\ldots,x_{m-1}$
are equal to the minimum of $Dom(A)$, and values $x_{m},\ldots,x_{n}$
are equal to the maximum of $Dom(A)$. As a result, the median $x_{m}$
in $D_{1}$ is the minimum of $Dom(A)$, and the median $x_{m}$ in
$D_{2}$ is the maximum of $Dom(A)$; thus, the global sensitivity
for the median is 
\[
\Delta(\textit{median})=\max(\textit{Dom}(A))-\min(\textit{Dom}(A)).
\]
Having a global sensitivity equal to the domain size severely
compromises the accuracy of differentially private estimations of
the median via calibration to the global sensitivity. The situation becomes
even worse when the domain of the attribute is not naturally bounded
(e.g. incomes are unbounded, even if the income of most individuals
falls within a given window); in such cases, it may not even be possible
to compute the global sensitivity without artificially restricting
the values to a fixed interval.

\subsubsection{Differential privacy via calibration to the smooth sensitivity}

Calibration to the smooth sensitivity tries to avoid the shortcomings
of global sensitivity by adjusting the amount of noise to each data
set. From~\cite{Nissim2007}, the smooth sensitivity for the median
is 
\[
\begin{array}{l}
S_{\textit{median},\epsilon}(D)=\\
{\displaystyle =\max_{k=0,\ldots,n}\{\exp(-k\epsilon)\max_{0\le t\le k+1}\{x_{m+t}-x_{m+t-k-1}\}\}.}
\end{array}
\]
Calibration to the smooth sensitivity is effective at reducing the amount
of noise that is added to most data sets. However, it still has some
drawbacks: 
\begin{itemize}
\item Computing the smooth sensitivity can be complex. In particular, the
formula for the smooth sensitivity of the median has time complexity
$\mathcal{O}(n^{2})$, which may not be feasible for large data sets. 
\item Similarly to the global sensitivity, the smooth sensitivity can only
be computed when $Dom(A)$ is bounded. 
\item The mechanisms used to attain DP are more complex than with global
sensitivity, and the eligible noise distributions are heavy-tailed
(rather than the exponentially decreasing distributions used in calibration
to the global sensitivity). 
\end{itemize}

\subsubsection{iDP (via calibration to the local sensitivity)}

iDP also allows the response mechanism to be independently adjusted
to each data set. iDP can be reached via calibration to the local sensitivity
(see Section~\ref{sec:mechanism}). For the median, the local sensitivity
is: 
\[
LS_{\textit{median}}(D)=\max\{x_{m}-x_{m-1},x_{m+1}-x_{m}\}.
\]
iDP via calibration to the local sensitivity offers the following advantages
compared to calibration to the smooth sensitivity: 
\begin{itemize}
\item The computation of the local sensitivity is less complex. 
\item The local sensitivity is lower than the smooth sensitivity. 
\item The local sensitivity does not depend on the size of the domain. It
only depends on $x_{m-1}$, $x_{m}$ and $x_{m+1}$. 
\item The mechanism used to attain iDP is simpler and the noise distributions
can be selected to be exponentially decreasing (e.g. the Laplace distribution). 
\end{itemize}

\subsubsection{Experimental comparison between DP with smooth 
sensitivity and iDP}

After the previous theoretical comparison, we now turn to experimental
evaluation of the accuracy for the median. Due to the poor accuracy
provided by calibration to the global sensitivity, the comparison focuses
on DP via calibration to the smooth sensitivity and iDP via calibration
to the local sensitivity.

For iDP, we have used the mechanism proposed in Proposition~\ref{prop:iDP_Lap}.
For standard DP, we have used 
a noise $N$ having a density of the form 
\[
f_{N}(x)\propto\frac{1}{1+|x|^{\gamma}},
\]
where $\gamma>1$. 
This noise distribution is $(\frac{\epsilon}{4\gamma},\frac{\epsilon}{\gamma})$-admissible to achieve DP calibrated to the smooth sensitivity~\cite{Nissim2007}.

In the comparison, we take $\epsilon=1$
and we assume that 
the local sensitivity equals the smooth sensitivity.
Since i) noise in DP is proportional to the 
smooth sensitivity, ii) noise in iDP is proportional
to the local sensitivity, and iii) the latter sensitivity is 
never greater than the former,  
both sensitivities being equal is the worst case for iDP.
Specifically, we take both sensitivities equal to 1. 
Figure~\ref{fig:comparison} shows the distribution
of the noise in the response. Even in the most favorable case for
the smooth sensitivity, the accuracy obtained with iDP is much better.
The reason is that, to enforce $\epsilon$-DP via calibration to the smooth
sensitivity, the noise distribution must be heavy-tailed. Table~\ref{tab:conf}
shows 95\%-confidence intervals for these noise distributions.

\begin{figure}
\centering{}\includegraphics[width=8.5cm]{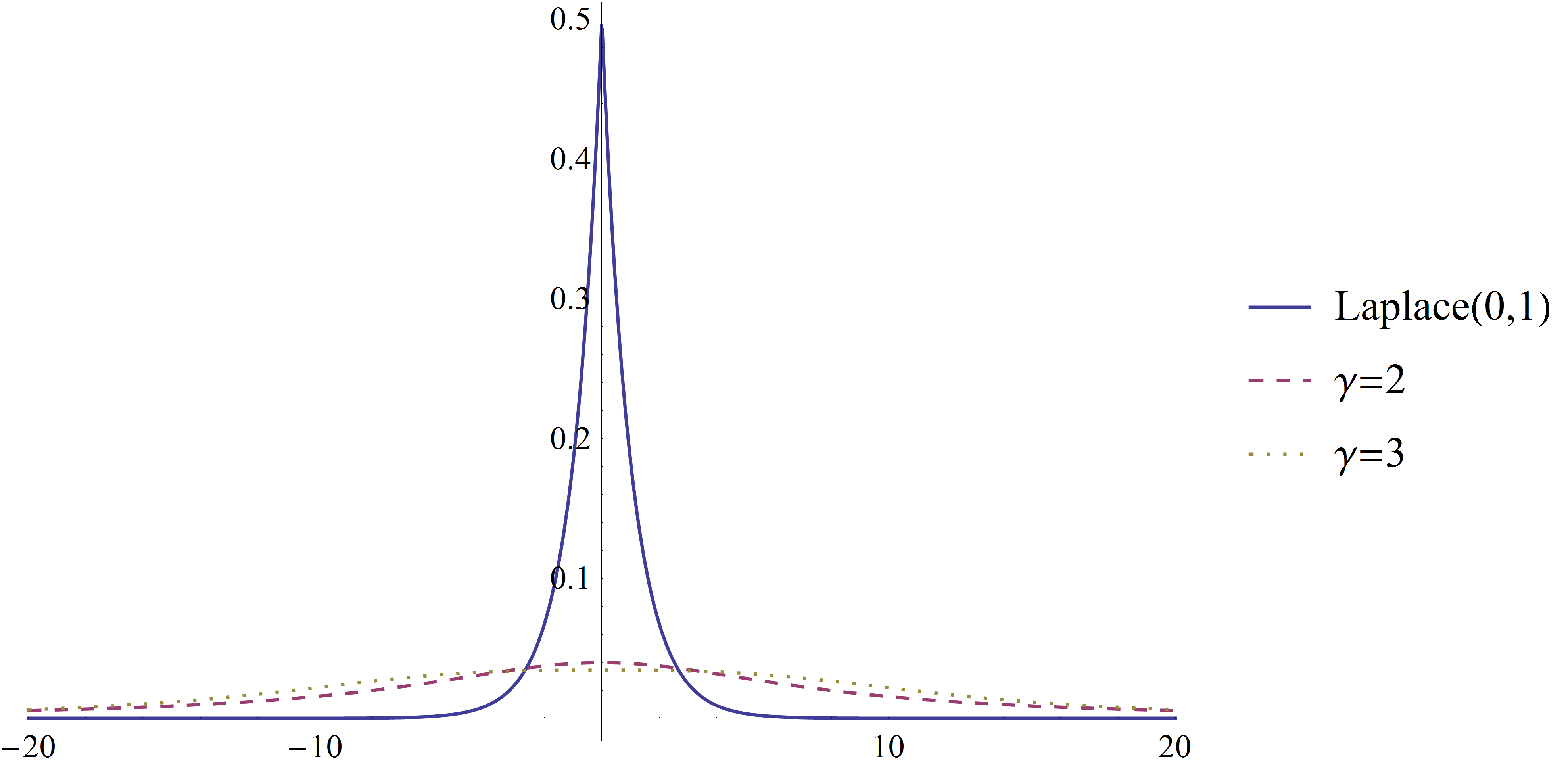}\caption{\label{fig:comparison}Comparison of the noise distributions required
by $\epsilon$-DP via calibration to the smooth sensitivity 
(for $\gamma=2$ and $\gamma=3$, where $\gamma$ is the noise 
parameter) and $\epsilon$-iDP via
calibration to the local sensitivity (with a Laplace distribution), where
the query is the median, $\epsilon=1$ and sensitivities are 1. }
\end{figure}

\begin{table}
\caption{\label{tab:conf}95\% confidence intervals of the noise distributions
required by $\epsilon$-DP via calibration to the smooth
sensitivity 
(for different values of the noise parameter $\gamma$)
and $\epsilon$-iDP via calibration to the local sensitivity (with
a Laplace distribution), where the query is the median, 
$\epsilon=1$ and sensitivities are 1.}
\centering{}%
\begin{tabular}{cccc}
\hline 
\emph{Privacy model}  & $\epsilon$-iDP  & $\epsilon$-DP $\gamma=2$ & $\epsilon$-DP $\gamma=3$\tabularnewline
\hline 
\emph{95\% interval}  & {[}-3,3{]}  & {[}-101.7,101.7{]}  & {[}-34.2,34.2{]}\tabularnewline
\hline 
\end{tabular}
\end{table}

These results show that the use of calibration to the smooth sensitivity
may be incompatible with an accurate and $\epsilon$-differentially
private approximation for the median. Although the smooth sensitivity
is definitely better than the global sensitivity for noise calibration,
at 95\% confidence the amount of noise can be as large as 101 times
the smooth sensitivity. Whether this is acceptable depends on the
data set. Only if the smooth sensitivity is very small compared to
the attribute domain, accuracy may still be acceptable. In contrast,
the accuracy provided by iDP is much better: at 95\% confidence, the
noise is at most 3 times the local sensitivity.

Finally, we evaluate the accuracy of the median by 
computing, for a collection of data sets, 
the average absolute error incurred; we use the absolute 
error because the relative error may be misleading when
the median is close to 0. 
We consider data 
sets with three different sizes  
(10, 100, and 1000) whose records are drawn from $U[0,1]$ (the uniform distribution in $[0,1]$), $N(0,1)$ (the normal distribution with mean 0 
and variance 1) and 
$\mathit{Exp}(1)$ (the exponential distribution with rate 1). 

For the smooth sensitivity to be finite, the domain of the records
must be bounded. Thus, to be able to compute the smooth sensitivity in the 
$N(0,1)$ and the $\mathit{Exp}(1)$ cases, we have to 
restrict the domain of the data.
This is done by bounding the domain to the range between the minimum and 
the maximum values in the data set. 
Figure~\ref{fig:comp} 
shows the average absolute error for each of the data sets. We include 
the results obtained
for DP by adjusting the noise to the smooth sensitivity with $\gamma=3$, and for
iDP by adjusting a Laplace noise to the local sensitivity. 
We took $\gamma=3$ as the best performing value 
after evaluating several other values of $\gamma$;
specifically, smaller values led to too much probability mass 
in the tails, whereas greater values
introduced greater correction factors to the noise distribution 
that offset the
gain produced by having the probability mass closer to zero.
By looking at the figure, we notice that, regardless of the 
underlying data distribution and the data set size, 
the results are significantly better for iDP. 

\begin{figure*}
	\begin{centering}
		\includegraphics[width=3.5cm]{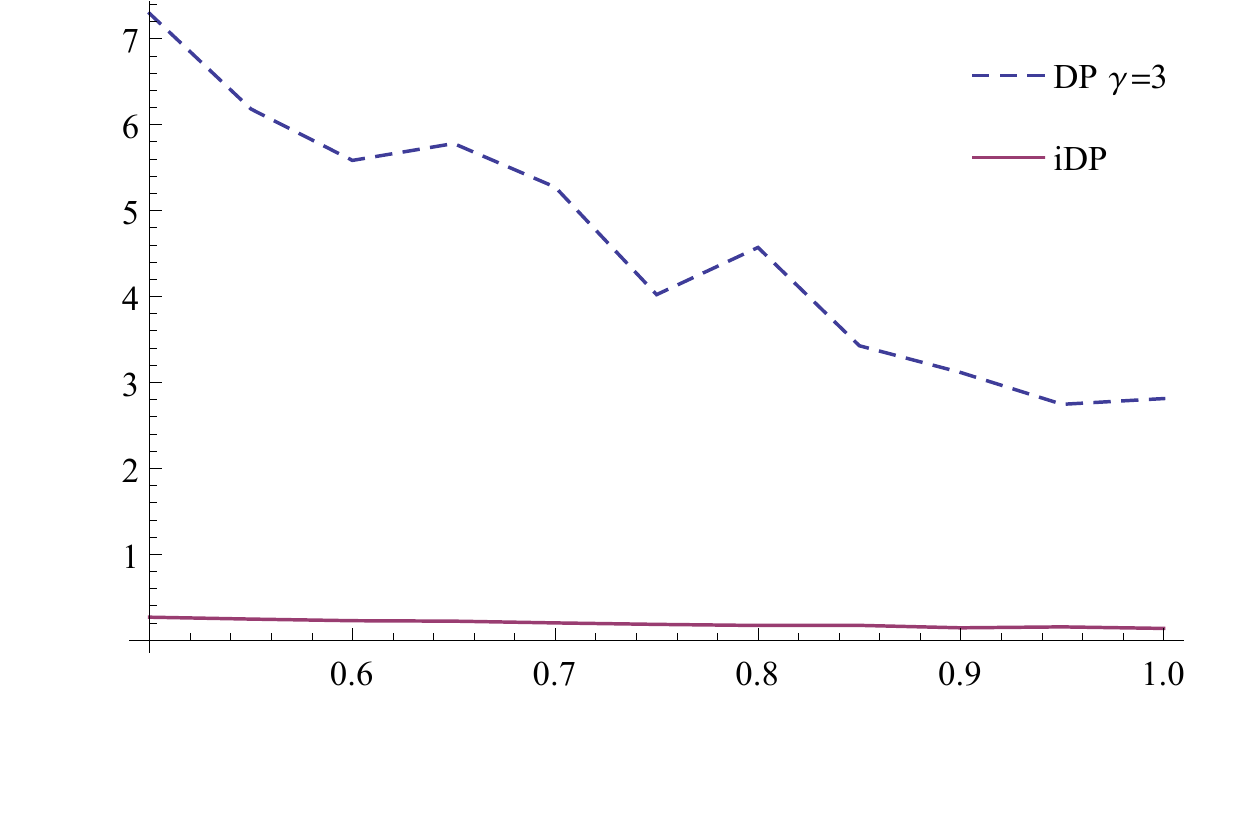}
		\includegraphics[width=3.5cm]{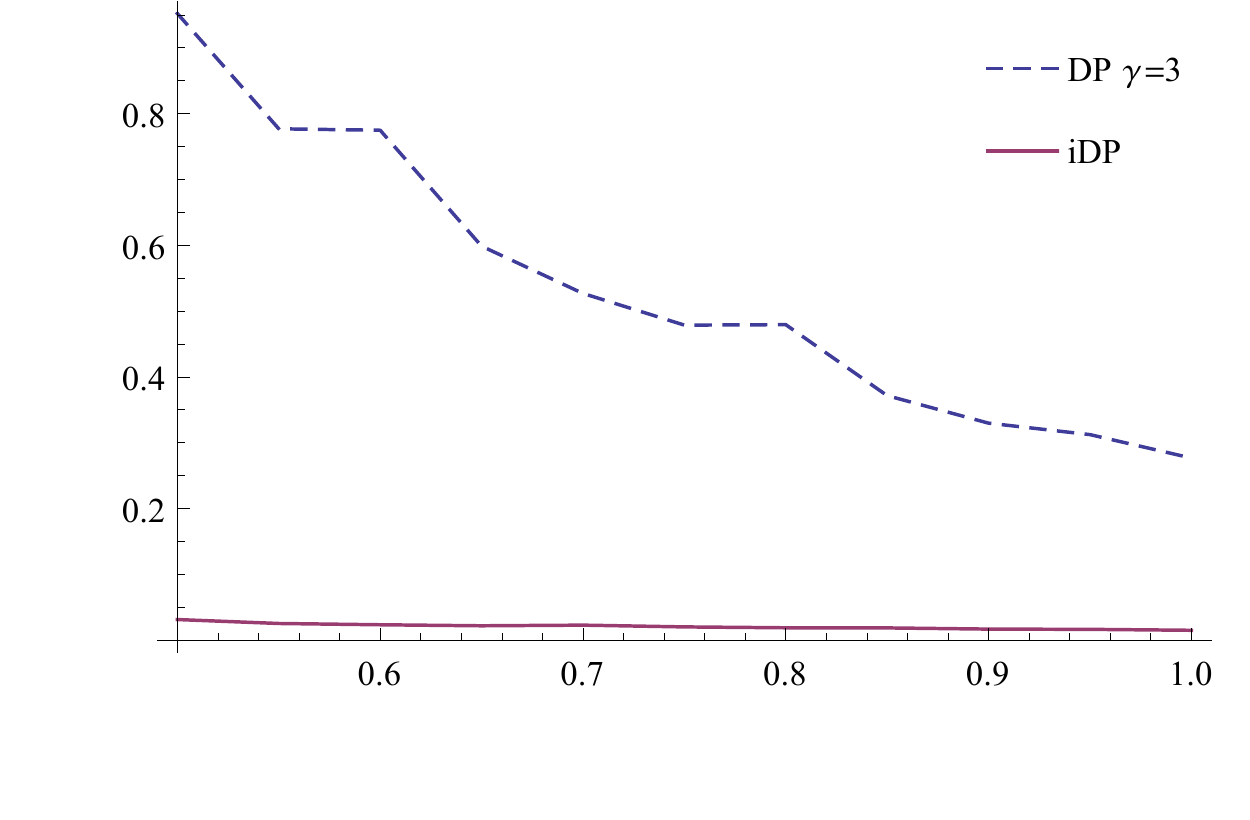}
		\includegraphics[width=3.5cm]{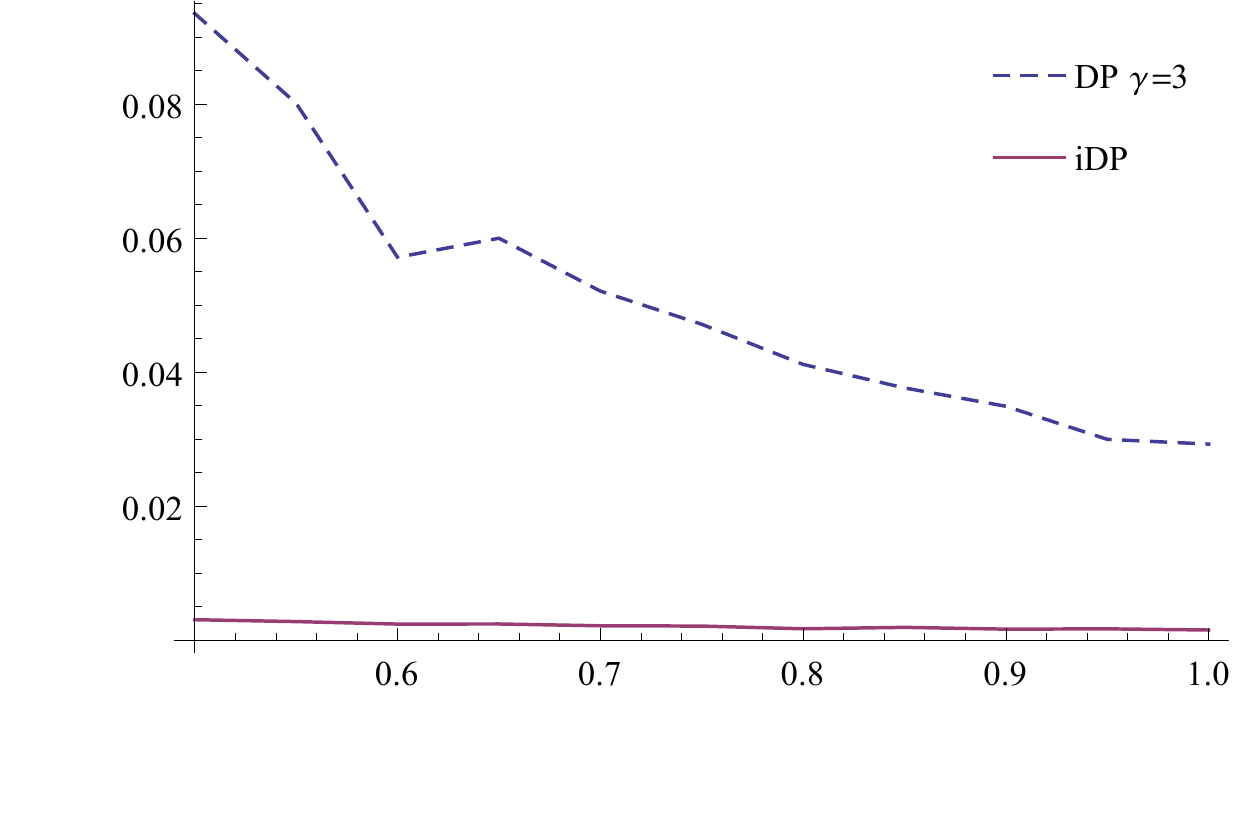}
		\par\end{centering}
	
	\begin{centering}
		\includegraphics[width=3.5cm]{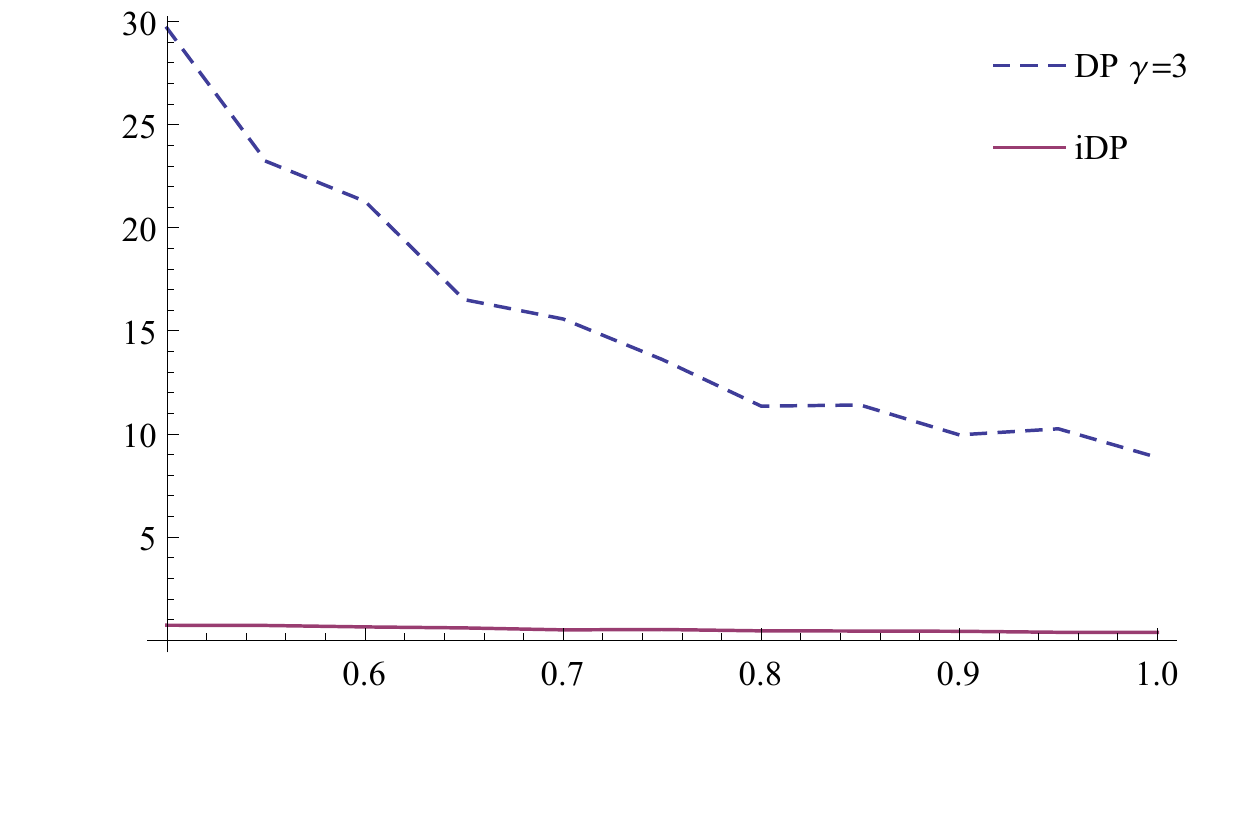}
		\includegraphics[width=3.5cm]{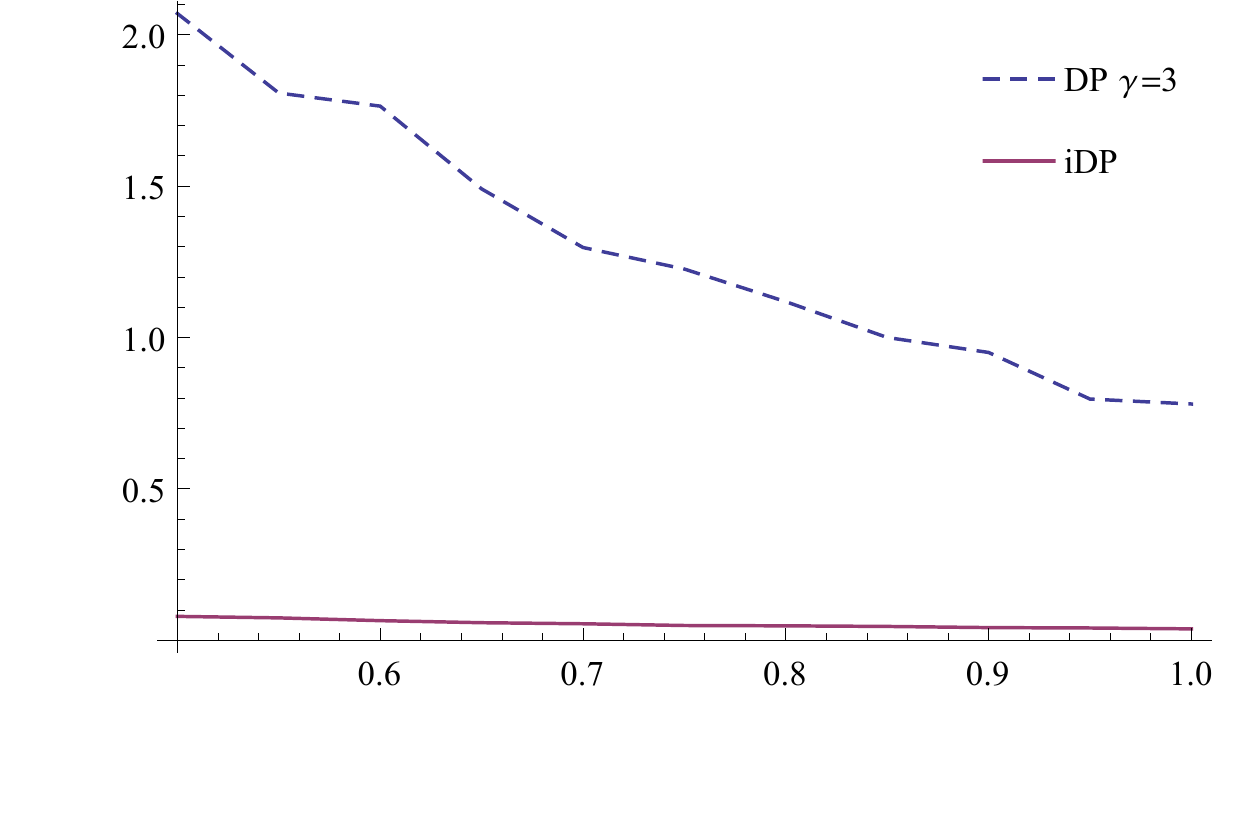}
		\includegraphics[width=3.5cm]{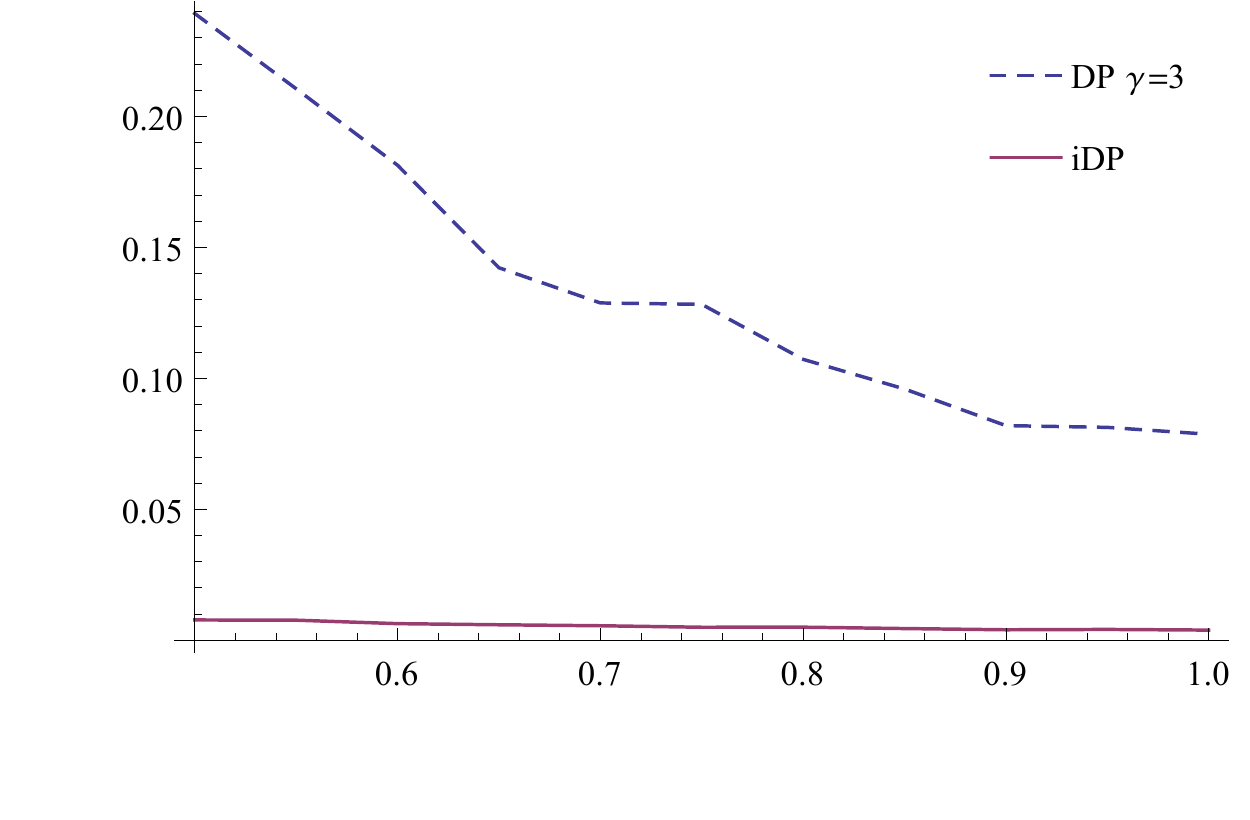}
		\par\end{centering}
	
	\begin{centering}
		\includegraphics[width=3.5cm]{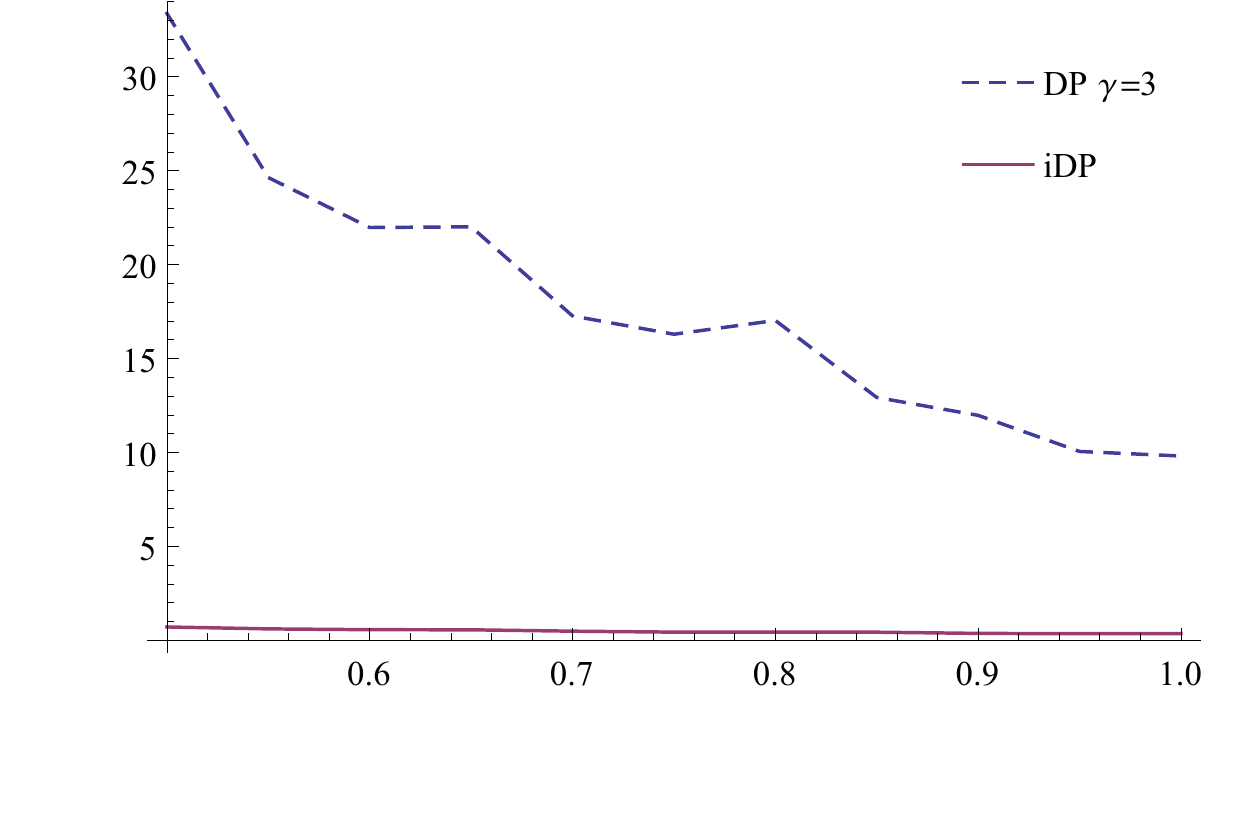}
		\includegraphics[width=3.5cm]{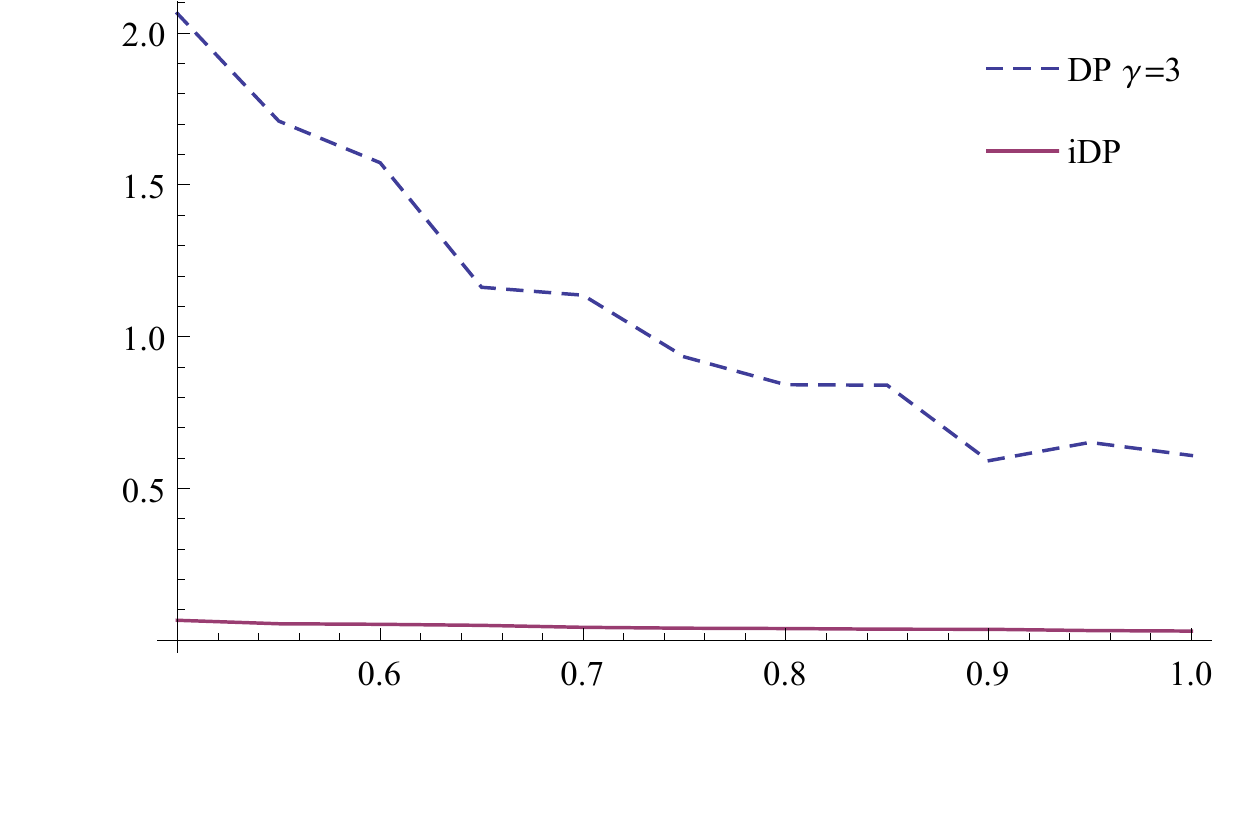}
		\includegraphics[width=3.5cm]{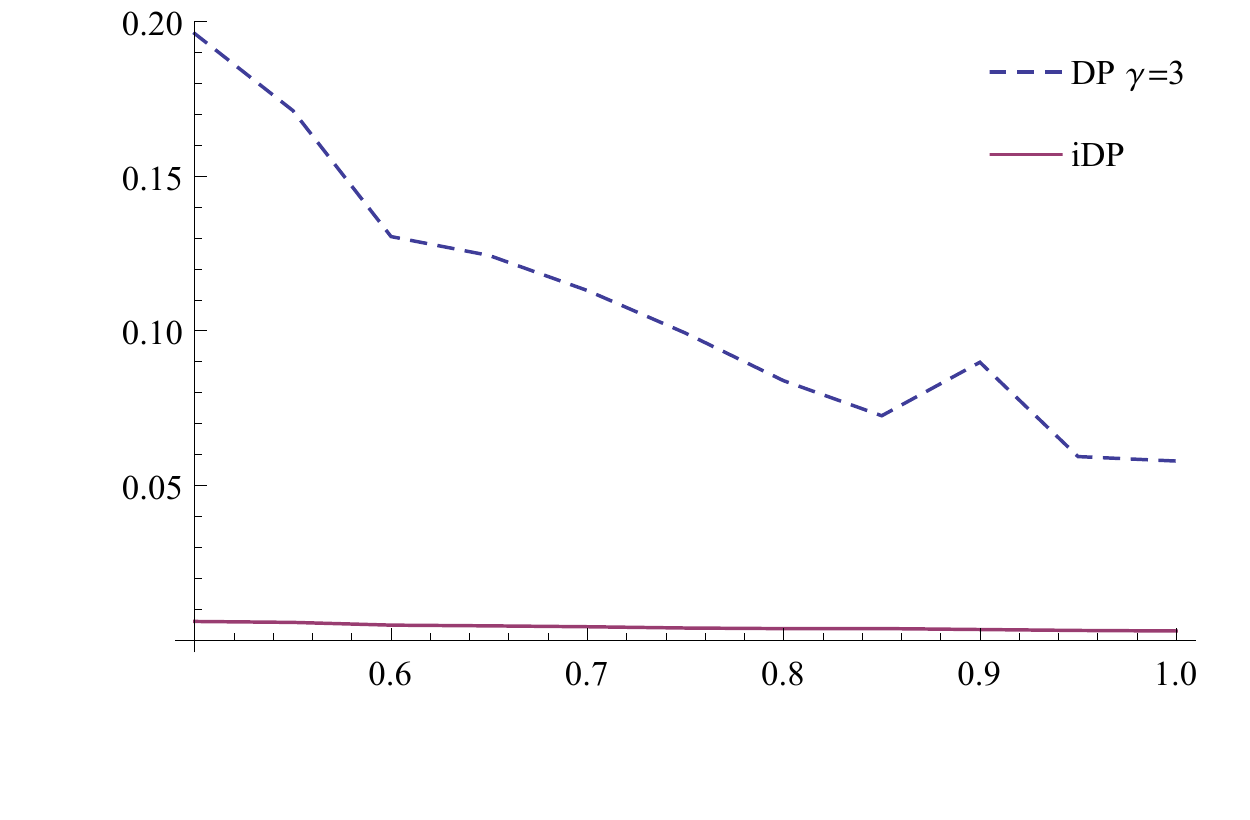}
		\par\end{centering}
	
	\caption{\label{fig:comp}Absolute error in the median (y-axis) for 
$\epsilon$ in the range $[0.5,1]$ 
		(x-axis) and several data sets. Data sets have been drawn from
		a specific distribution: $U[0,1]$ (top row graphs), 
$N(0,1)$ (middle row graphs) 
		and $\mathit{Exp}(1)$ (bottom row graphs). 
Data set sizes are 10 (left column graphs), 100 (middle column graphs), 
		and 1000 (right column graphs).
Especially in the left column graphs, it can be seen that
strict DP yields completely off-range median values for all
considered $\epsilon$,
as the median of a $U[0,1]$ must be within $[0,1]$, and, with very great
probability, the
median of a $N(0,1)$ is within $[-3,3]$ and the median of an $Exp(1)$
is within $[0,5]$. Clearly, iDP is much more utility-preserving.}
\end{figure*}

\subsection{Maximum}

Rather than comparing the accuracy obtained with different mechanisms
in a baseline scenario, this example focuses on the maximum value
of a data set and shows how iDP is useful to avoid calibrating the
noise to a domain-dependent (and hence too large) sensitivity.

\subsubsection{Differential privacy via calibration to the global sensitivity}

The global sensitivity of the maximum of a set of values equals the
length of the domain. Let us consider data sets $D_{1}$ and $D_{2}$
such that: in $D_{1}$, all the values $x_{1},\ldots,x_{n}$ are equal
to the minimum of $Dom(A)$, and in $D_{2}$ the values $x_{1},\ldots,x_{n-1}$
are equal to the minimum of $Dom(A)$, and the value $x_{n}$ is equal
to the maximum of $Dom(A)$. Thus, the global sensitivity for the
maximum is 
\[
\Delta(maximum)=\max(Dom(A))-\min(Dom(A)).
\]
Like in the case of the median, having a global sensitivity equal
to the domain size severely degrades the accuracy of differentially
private estimations of the maximum via calibration to the global sensitivity.
Also, if the domain is unbounded, the global sensitivity cannot even
be computed.

\subsubsection{Differential privacy via calibration to the smooth sensitivity}

The smooth sensitivity for the maximum can be computed as 
\[
\begin{array}{l}
S_{maximum,\epsilon}(D)=\\
{\displaystyle \max_{k=0,\ldots,n}\{\exp(-k\epsilon)(\max(\textit{Dom}(A))-x_{n-k}),}\\
\exp(-k\epsilon)(x_{n}-x_{n-k-1})\}.
\end{array}
\]
Calibration to the smooth sensitivity is effective to reduce the amount
of noise that is added to most data sets. In fact, except for the
worst-case data sets described when computing the global sensitivity,
the smooth sensitivity is smaller. However, the smooth sensitivity
does not avoid the dependence on the domain of the attribute.

\subsubsection{iDP (via calibration to the local sensitivity)}

The local sensitivity of the maximum can be computed as 
\[
LS_{maximum}(D)=\max\{\max(\textit{Dom}(A))-x_{n},x_{n}-x_{n-1}\}.
\]
While iDP offers the expected advantages (the local sensitivity is
smaller than the smooth sensitivity, and we can use exponentially
decreasing noise distributions), the local sensitivity still depends
on the domain of the attribute, which can be unbounded. However, iDP
allows an easy workaround to solve this issue: rather than querying
for the maximum value, we can query for the second maximum. Unless
there is a single record that has a significantly greater value (as
in the worst-case data set described above), querying for the second
maximum should be a reasonably good approximation; otherwise, the
approximation will be very inexact, but, anyway, the accuracy when
querying for the maximum would also be poor. The advantage of querying
for the second maximum is that the local sensitivity does not depend
on the domain anymore: 
\[
LS_{2-\max}(D)=\max\{x_{n}-x_{n-1},x_{n-1}-x_{n-2}\}.
\]
It is important to note that this workaround to avoid having a sensitivity
that depends on the domain attribute is not possible with calibration
to the smooth sensitivity.

\subsection{Range queries}

Given a data set $D$ and a range $\mathcal{R}$, a range query counts the number
of records of $D$ that are contained in $\mathcal{R}$:
\[
f_{\mathcal{R}}(D)=\left|\{x\in D:\,x\in\mathcal{R}\}\right|.
\]

Range queries behave well under DP: their sensitivity is 1. This low
sensitivity is preserved in histogram queries (a collection of range 
queries over disjoint ranges).

The mechanisms to attain $\epsilon$-iDP described in Sections~\ref{sub:lap}
and~\ref{sub:discrete} were based on calibrating the noise
to the local sensitivity at $D$. As the local sensitivity of $f_{\mathcal{R}}$
at any $D$ is 1, which is equal to the global sensitivity, the use
of these mechanisms does not provide improved accuracy w.r.t. DP. 

\section{Conclusions and future work}

This work formalizes and discusses \emph{individual differential privacy},
an alternative to the standard formulation of DP that reduces the
noise to be added to the query results and, thus, better preserves their
accuracy/utility. While, at first sight, individual differential privacy may
look like another relaxation of DP, it
exactly maintains the intuitive disclosure limitation guarantee of DP: the
presence or absence of one individual in the data set must be unnoticeable
from the query result. Improving the accuracy of the results
is possible because individual differential privacy exploits 
the fact that the actual data set is known by the trusted data controller at
the time of answering queries. By focusing only on indistinguishability
between the actual data set and its neighbor data sets, the sensitivity
of the query and, and hence, the magnitude of noise to be added significantly
decrease.

We have also proposed several mechanisms to attain individual differential
privacy. First, we have explained that any mechanism providing $\epsilon$-DP
also provides $\epsilon$-individual differential privacy. However,
direct use of $\epsilon$-differentially private mechanisms fails
to reap the potential accuracy improvements of individual differential
privacy. Next, we have shown that, for numerical queries, $\epsilon$-iDP
can be attained by adjusting the noise to the local sensitivity, which
results in substantial accuracy gains: 
\begin{itemize}
\item Since the local sensitivity is normally significantly smaller than
the global sensitivity employed by standard DP, iDP leads to substantially
better accuracy. 
\item Even if noise is calibrated to the smooth sensitivity, iDP still offers
much better accuracy and this for two reasons: (i) the local sensitivity
is smaller than the smooth sensitivity (although the difference is
not as important as with respect to the global sensitivity); (ii) exponentially
decreasing random noise can be used with local sensitivity (in contrast
with the heavy-tailed noises that are to be employed 
with calibration to the smooth sensitivity). 
\end{itemize}
In addition to improved accuracy, iDP via local sensitivity is less
dependent on the attribute domain (which may be large or even unbounded);
what is more, in case the local sensitivity 
depends on the domain, workarounds can be
found.

In our opinion, the significant accuracy/utility gains brought by
$\epsilon$-iDP, together with its strong privacy guarantee (the same
intuitive privacy guarantee of $\epsilon$-DP for individuals), pave
the way to using $\epsilon$-iDP where standard DP is not viable.
As future work, we plan to study the performance of iDP for other
common queries in the interactive scenario.
We also plan to design mechanisms, other than noise addition, that may
offer improved utility for specific tasks. Specifically,
we aim at mechanisms for non-numerical discrete functions
that leverage the advantages of iDP.
Finally, we also plan to apply
$\epsilon$-iDP to non-interactive data releases. Although such releases
provide more flexibility regarding data uses, they have been traditionally
neglected by researchers because of the enormous distortion that making
them differentially private according to the standard definition would
entail~\cite{Soria2014}.

\section*{Acknowledgments and disclaimer}

We thank Frank McSherry and Daniel Kifer for their
comments and suggestions on individual differential privacy.
This work was supported by the European Commission (projects H2020-644024
``CLARUS'' and H2020-700540 ``CANVAS''), by the Spanish Government
(projects TIN2014-57364-C2-1/2-R ``SmartGlacis'', 
TIN2016-80250-R ``Sec-MCloud'' and TIN2015-70054-REDC)
and by the Government of Catalonia under grant 2014 SGR 537. Josep
Domingo-Ferrer is partially supported as an ICREA-Acadèmia researcher
by the Government of Catalonia. The opinions expressed in this paper
are the authors' own and do not necessarily reflect the views of UNESCO.
The first and second authors thank the Isaac Newton Institute for Mathematical
Sciences at the University of Cambridge  
for support and hospitality during the 
``Data Linkage and Anonymisation'' programme 
(funded by EPSRC Grant Number EP/K032208/1),
when work on this paper was completed and discussed.

\providecommand{\url}[1]{#1} \csname url@samestyle\endcsname \providecommand{\newblock}{\relax}
\providecommand{\bibinfo}[2]{#2} \providecommand{\BIBentrySTDinterwordspacing}{\spaceskip=0pt\relax}
\providecommand{\BIBentryALTinterwordstretchfactor}{4} \providecommand{\BIBentryALTinterwordspacing}{\spaceskip=\fontdimen2\font plus
\BIBentryALTinterwordstretchfactor\fontdimen3\font minus
  \fontdimen4\font\relax} \providecommand{\BIBforeignlanguage}[2]{{%
\expandafter\ifx\csname l@#1\endcsname\relax
\typeout{** WARNING: IEEEtran.bst: No hyphenation pattern has been}%
\typeout{** loaded for the language `#1'. Using the pattern for}%
\typeout{** the default language instead.}%
\else
\language=\csname l@#1\endcsname
\fi
#2}} \providecommand{\BIBdecl}{\relax} \BIBdecl

\end{document}